\tikzset{
->, %
node distance=2cm, %
initial text=$ $, %
state/.style={circle, draw, minimum size=0.2cm}
}
   \def\@citecolor{blue}%
   \def\@urlcolor{blue}%
   \def\@linkcolor{blue}%
\def\orcidID#1{\smash{\href{http://orcid.org/#1}{\protect\raisebox{-1.25pt}{\protect\includegraphics{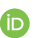}}}}}
\author{%
Takumi Shimoda\inst{1} \and
Naoki Kobayashi\inst{1}\orcidID{0000-0002-0537-0604} \and
Ken Sakayori\inst{1}\orcidID{0000-0003-3238-9279} \and
Ryosuke Sato\inst{1}\orcidID{0000-0001-8679-2747}%
}
\authorrunning{T. Shimoda et al.}
\institute{The University of Tokyo, Tokyo, Japan}
\begin{document}
\title{Symbolic Automatic Relations and\texorpdfstring{\\}{} Their Applications to SMT and CHC Solving}
\titlerunning{Symbolic Automatic Relations}
\maketitle              %
\begin{abstract}

  Despite the recent advance of automated program verification,
  reasoning about recursive data structures remains as a challenge for
  verification tools and their backends such as SMT and CHC solvers.
  To address the challenge, we introduce the notion of symbolic automatic
  relations (SARs), which combines symbolic automata and automatic relations,
  and inherits their good properties such as the closure under Boolean operations.
  We consider the satisfiability problem for SARs, and show that it is undecidable
  in general, but that we can construct a sound (but incomplete)
  and automated satisfiability checker by a reduction to CHC solving.
  We discuss applications to SMT and CHC solving on data structures, and show
  the effectiveness of our approach through experiments.
\end{abstract}

\section{Introduction}
\label{sec:intro}

The recent advance of automated or semi-automated program verification
tools owes much to the improvement of SMT (Satisfiability Modulo
Theory) and CHC (Constrained Horn Clauses) solvers. The
former~\cite{MouraBjorner08,BarrettCDHJKRT11} can automatically check
the satisfiability of quantifier-free formulas modulo background
theories (such as linear integer arithmetic), and the
latter~\cite{KomuravelliGC16,ChampionCKS20,HojjatRummer18} can
automatically reason about recursively defined predicates (which can be used
to model loops and recursive functions).
Various program verification problems can be reduced to CHC
solving~\cite{Bjorner15}.
The current SMT and CHC solvers are, however,  not
very good at reasoning about recursive data structures (such as lists and trees),
compared with the capability of reasoning about basic data such as integers
and real numbers.
Indeed, improving the treatment of recursive data structures
has recently been an active research topic, especially for CHC solvers~\cite{DBLP:journals/tplp/AngelisFPP18a,UnnoTS17,ChampionCKS20,DBLP:conf/tacas/FedyukovichE21}.

In the present paper, we propose an automata-based approach for checking the satisfiability
of formulas over recursive data structures.
(For the sake of simplicity, we focus on
lists of integers; our approach can, in principle, be extended for more general
data structures).
More precisely, we introduce the notion of \emph{symbolic automatic relations}, which is obtained
by combining \emph{automatic relations}~\cite{BlumensathGradel00} and
\emph{symbolic automata}~\cite{VeanesHT10,DBLP:conf/lpar/VeanesBM10,DBLP:conf/cav/DAntoniV17}.

A \(k\)-ary automatic relation is a relation on \(k\) words\footnote{We use ``lists'',
  ``words'', and ``sequences'' interchangeably.}
that can be recognized by a finite state automaton
that reads \(k\) words in a synchronous manner
(so, given \(k\) words, \(x_{01}\cdots x_{0m}, \ldots, x_{(k-1)1}\cdots x_{(k-1)m}\),
the automaton reads a tuple \((x_{0i},\ldots,x_{(k-1)i})\) at each transition;
if the input words have different lengths,
the special padding symbol \(\pad\) is filled at the
end).
For example, the equality relation on two words over the alphabet \(\set{a,b}\) is an automatic
relation, since it is recognized by the automaton with a single state \(q\) (which
is both initial and accepting) with the transition \(\delta(q, (a,a))=q\) and
\(\delta(q,(b,b))=q\).
By using automatic relations, we can express and manipulate relations on data structures.

\begin{figure}[t]
\begin{minipage}{0.3\linewidth}
\fbox{\( M_{\text{sync}} \)}
\par\vspace{-1em}
\centering
\begin{tikzpicture}
\node[state, initial, accepting] (q) {\(q\)};
\draw (q) edge[loop above] node{\( (a, a) \)} (q);
\draw (q) edge[loop right] node{\( (b, b) \)} (q);
\end{tikzpicture}
\end{minipage}
\begin{minipage}{0.4\linewidth}
\fbox{\( M_{\text{symb}} \)}
\par\vspace{-1em}
\centering
\begin{tikzpicture}
\node[state, initial,accepting] (q0) {\( q_0 \)};
\node[state, accepting,right of=q0, xshift=1cm] (q1) {\( q_1 \)};
\draw (q0) edge[above, bend left=20] node{\( 0 < x \)} (q1)
      (q1) edge[below, bend left=20] node{\( x < 0 \)} (q0);
\end{tikzpicture}
\end{minipage}
\begin{minipage}{0.25\linewidth}
\fbox{\( M_< \)}
\par\vspace{-1em}
\centering
\begin{tikzpicture}
\node[state, initial, accepting] (q) {\(q\)};
\draw (q) edge[loop above] node{\( l_0 < l_1 \)} (q);
\end{tikzpicture}
\end{minipage}
\caption{Examples of synchronous, symbolic and symbolic synchronous automaton.}
\label{fig:automata-for-introduction}
\end{figure}
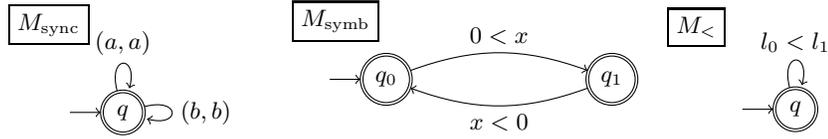

Since data structures typically contain elements from an infinite set,
we extend automatic relations by using symbolic automata.
Here, a symbolic automaton is a variation of finite state automaton whose alphabet is
possibly infinite, and whose transition is described by
a formula over elements of the alphabet.
For example, \( M_{\text{symb}} \) on Fig.~\ref{fig:automata-for-introduction} is a symbolic automaton that accepts
the sequences of integers in which
positive and negative integers occur alternately, and
the first element is a positive integer.
The \emph{symbolic automatic relations} introduced in this paper
 are relations recognized by symbolic automata that read input words (over
 a possibly infinite alphabet) in a synchronous manner.
 For example, consider the binary relation:
 \begin{align*}
 \REL_< = \{(l_{01}\cdots l_{0n}, l_{11}\cdots l_{1n})
   \in \setZ^*\times \setZ^*  %
   \mid l_{0i} < l_{1i} \text{ for every \(i\in\set{1,\ldots,n}\)}\}.
\end{align*}
 It is a symbolic automatic relation, as it is recognized by
 the symbolic synchronous automaton \( M_{<} \) on Fig.~\ref{fig:automata-for-introduction}
 (where \(l_0\) and \(l_1\) are bound to \(l_{0i}\) and \(l_{1i}\)
 at the \(i\)-th transition step).

 Symbolic automatic relations (SARs) inherit good properties of automatic relations
 and symbolic automata: SARs are closed under Boolean operations,
 and the emptiness problem of SAR (the problem of deciding whether \(\REL=\emptyset\),
 given the representation of a SAR \(\REL\))
 is decidable if the underlying theory (e.g. linear integer
 arithmetic) used for representing transitions is decidable.

 We are interested in the satisfiability problem for SARs,
 i.e., the problem of checking whether a given existentially-quantified formula
 (obtained by extending the signature of linear integer arithmetic
 with SARs and list constructors/destructors) is satisfiable.
 For example, whether \(\REL_<(X,Y) \land \REL_<(Y,X)\) is satisfiable
  (the answer is ``No'')
 is an instance of the problem.
 The class of existentially-quantified formulas considered in
 the satisfiability problem is reasonably expressive.
 For example,
 although the sortedness predicate \(\pred{sorted}(X)\) and the
 predicate \(\pred{nth}(X, i, x)\) (the \(i\)-th element of \(X\) is \(x\))
 are themselves not automatic relations, we can allow those predicates
 to occur in the formulas, as explained later.
 (Unfortunately, however, we cannot express the ``append'' relation.)

We first show that the satisfiability problem for SARs is undecidable, unfortunately.
The proof is based on a reduction from the undecidability of the halting problem for Minsky machines (or, two-counter machines).
Next, we show that the satisfiability problem for SARs can be reduced to
the satisfiability problem for Constrained Horn Clauses (CHCs) over integers
(\emph{without} lists). Thanks to the recent development of efficient CHC
solvers~\cite{KomuravelliGC16,ChampionCKS20,HojjatRummer18},
we can thus obtain a sound, automated (but incomplete) procedure for solving
the satisfiability problem for SARs.
We show, through experiments, that our reduction is effective, in that
the combination of our reduction with off-the-shelf CHC solvers can
solve the satisfiability problem for
various formulas over lists
that cannot be solved by state-of-the-art SMT solvers such as Z3 and CVC4.

Besides the above-mentioned improvement of
SMT solvers on recursive data structures, we also have in mind an application
to CHC solving
(indeed, improving CHC solvers was the original motivation of our work).
The goal of CHC solving is to check whether a given set of CHCs
has a model (interpretations for predicate variables
that make all the clauses valid).
Many of the CHC solvers prove the satisfiability of given CHCs
by constructing an actual model.
The main problem on such CHC solvers in dealing
with recursive data structures is that the language for describing models is too
restrictive: especially, it cannot express recursively defined predicates on
recursive data structures (apart from some built-in predicates such as the ``length'' predicate).
Our symbolic automatic relations can be used to enhance the expressive
power of the language.
The above-mentioned procedure for the SAR satisfiability problem can be
directly applied to an ICE-based CHC solver like \hoice~\cite{ChampionCKS20}.
\hoice consists of a learner, which constructs a candidate model, and a teacher, which checks whether the candidate is indeed a model.
Our procedure can be used by the teacher, when a given candidate is described
by using symbolic automatic relations.
Later in the paper, we give
examples of CHCs whose models can only be expressed by using
symbolic automatic relations, and show through experiments that
our procedure can indeed be used for checking the validity of models
described by using symbolic automatic relations.

Our contributions are: %
\begin{inparaenum}[(i)]
\item introduction of symbolic automatic relations and discussions of
  applications to SMT and CHC solving;
  \item a proof of the undecidability of the satisfiability problem on SARs;
  \item a sound (but incomplete) decision procedure for the satisfiability problem on SARs,
    via a reduction to CHC solving on integers
  \item an implementation and experiments to confirm the effectiveness of the above
    decision procedure.
\end{inparaenum}

The rest of this paper is structured as follows.
Section~\ref{sec:preliminaries} briefly reviews the notions used in many-sorted first-order logic.
Section~\ref{sec:sar} defines symbolic automatic relations and demonstrates how they can be used to express predicates over lists.
Section~\ref{sec:undecidabiltity} shows that the satisfiability problem for SARs is undecidable.
Section~\ref{sec:reduction} shows a reduction from the satisfiability problem for SARs to CHC solving, and Section~\ref{sec:experiments} reports experimental results.
Section~\ref{sec:related-work} discusses related work and Section~\ref{sec:conclusion} concludes the paper.

\section{Preliminaries}
\label{sec:preliminaries}
This section introduces basic notions and notations used in the sequel.

\paragraph{Notations} Given a set \( S \), we write \( S^* \) for the set of all finite sequences over \( S \).
A word \( w \in S^* \) is either written as \( w = a_1 \cdots a_n \) or \( w = \ls{a_1, \ldots, a_n}\), where \( a_i \in S\); the empty word is denoted as \( \emptyword \) or \( \emptyls \).
The set of integers is written as \( \setZ \).

\subsection{FOL with list}
\subsubsection*{Syntax}
A (multi-sorted) \emph{signature} is a triple \( (\Ty , \fnsymb, \pdsymb)\), where \( \Ty\) is a set of sorts (aka types), \( \fnsymb \) is a set of (typed) function symbols and, \( \pdsymb \) is a set of (typed) predicate symbols.
There are two signatures that play important roles in this paper: the signature of integer arithmetic and the signature for integer lists.
We define the signature of integer arithmetic \(\sigint \) as \( ( \{ \intty \}, \fnint, \pdint) \).
The set \( \fnint \) contains the function symbols for integer arithmetic such as \( 0, s, + \) and \( \pdint \) contains the predicates for integer arithmetic such as \( {\eqint}, {<}, {\le } \); the precise definition is not important.
The signature of integer lists is defined by \( \siglint \defeq (\{\intty, \lintty \}, \fnint \cup \fnlint, \pdint  \cup \{ \eqlint \} ) \), where \( \fnlint \defeq \{\nil, \fsymb{cons}, \fsymb{head}, \fsymb{tail} \} \).
Here, \( \nil \) and \( \fsymb{cons} \) have type \( \lintty \) and \( (\intty, \lintty) \to \lintty \) and their intended meanings are the empty list and the ``cons function'', respectively.
As the name suggests, \( \fsymb{head} : \lintty \to \intty \) and \( \fsymb{tail} : \lintty \to \lintty \) will be interpreted as head and tail functions.

The set of \emph{terms} and the set of \emph{formulas} over a signature \( \tau  = (\Ty, \fnsymb, \pdsymb )\) is defined as follows:
\begin{align*}
    t &\Coloneqq x \mid f(t_1, \ldots, t_n) \\
    \phi &\Coloneqq \top \mid \bot \mid P(t_1, \ldots, t_n) \mid \lnot \phi \mid \phi_1 \wedge \phi_2 \mid \phi_1 \vee \phi_2 \mid \forall x \phi \mid \exists x \phi
\end{align*}
where \( x \) ranges over the denumerable set of variables, \( f \) ranges over \( \fnsymb\) and \( P \) ranges over \( \pdsymb \).
In what follows, we only consider well-typed formulas; we omit the definition of typing rules as they are standard.

Let us set some notational conventions for terms and formulas over the signature \( \siglint \) (or \( \sigint \)).
We use \( X, Y, Z, \ldots \) to range over the set of variables of type \( \lintty \) and \( x, y, z, \ldots \) to range over the set of variables of type \( \intty \).
A term is called an \emph{integer term} if it has type \( \intty \) and is called a \emph{list term} if it has type \( \lintty \).
We use \( T \) and \( t \) to range over the set of list terms  and integer terms, respectively.
We write \( \seq{x} \) (resp.\  \( \seq{X} \)) to represent a possibly empty sequence of integer variables (resp.\ list variables); \( \seq{t} \) and \( \seq{T} \) are defined similarly.

\subsubsection*{Semantics}
\newcommand{\model}{\mathcal{M}}
\newcommand{\mlint}{{\model_{\mathrm{list}}}}
\newcommand{\sort}{\iota}
A \emph{model} \( \mathcal{M} \) (or \emph{structure}) over a signature \( \signature = (\Ty, \fnsymb, \pdsymb)\) is a triple \( ((U_\sort)_{\sort \in \Ty}, (f^\model)_{f \in \fnsymb}, (P^\model)_{P \in \pdsymb})\), where each \( U_\sort \) is a non-empty set called a \emph{universe}, \( f^\model \) is a function over the universes, \( P^\model \) is a relation over the universes.
If \( f \) has type \( (\sort_1, \ldots, \sort_n ) \to \sort_{n + 1} \) then \( f^\model \) is a function from \(U_{\sort_1} \times \cdots \times U_{\sort_n} \) to \( U_{\sort_{n + 1}} \); the same applies to \( R^\model \).
We write \(( \setZ, (f^\setZ)_{f \in \fnint}, (P^\setZ)_{P \in \pdint} ) \) or simply \( \setZ \) for the standard model of integer arithmetic.
The \emph{standard model for integer lists} \( \model_{list} \) is a model over \( \siglint \) such that \( U_{\intty} \defeq \setZ \) and \( U_{\lintty} \defeq \setZ^* \); \( f^\mlint \defeq f^\setZ \) for \( f \in \fnint\);  \( P^\mlint \defeq P^\setZ \) for \( P \in \pdint\); \( \eqlint^\mlint \) is the diagonal relation on \( \setZ^* \); and the interpretations for symbols in \( \fnlint \) are defined in a natural way.
Formally, interpretations for symbols in \( \fnlint \) is defined by \( \nil^\mlint \defeq \emptyword \); \( \fsymb{cons}^\mlint(i, w) = i w\); \( \fsymb{head}^\mlint (i w ) \defeq  i \) and \( \fsymb{head}^\mlint (\emptyword) \defeq 0 \); and \( \fsymb{tail}^\mlint (i w) \defeq  w \) and \( \fsymb{tail}^\mlint (\emptyword) \defeq \emptyword \), where \( i \in \setZ\) and \( w \in \setZ^*\).\footnotemark
\footnotetext{Note that \( \fsymb{head}^\mlint \) and \( \fsymb{tail}^\mlint \) are defined as total functions.
This matches the behaviors of the existing SMT solvers such as Z3 or CVC4.
}

The semantics of terms and formulas are defined in the standard way.
Let \( \signature \) be a signature and \( \model \) be a model over \( \signature \).
An \emph{assignment} \( \alpha \) in \( \model \) maps variables of type \( \sort \) to elements of the universe associated with \( \sort \).
A triple \( (t, \model, \alpha ) \) of a term with type \( \sort \), a model and an assignment determines an element of the universe associated with \( \sort \), which we write as \( \sem{t}_{\model, \alpha}\).
Similarly, a triple \( (\phi , \model, \alpha) \), where \( \phi \) is a formula, determines whether the satisfaction relation \( \mathcal{M}, \alpha \models \phi \) holds.
We omit the precise definitions of \( \sem{t}_{\model, \alpha }\) and the satisfaction relation as they are defined as usual.
Since the truth or falsity of \( \model, \alpha \models \phi \) depends only on the values of \( \alpha \) for free variables of \( \phi \), we may write \( \model, [\seq{x} \mapsto \seq{a}] \models \phi \) if the free variables of \( \phi \) are among \( \seq{x} = x_1, \ldots, x_n\) and \(\seq{a} = \alpha(x_1), \ldots, \alpha(x_n)\).
We say that a formula \( \phi \) is \emph{satisfiable in \( \model \)} if there is an assignment \( \alpha\) such that \( \model, \alpha \models \phi \) and \( \phi \) is \emph{satisfiable} if there is a model \( \model \) in which \( \phi \) is satisfiable.
A formula \( \phi \) is \emph{valid in \( \model \)} if  \( \model, \alpha \models \phi\) holds for all assignments \( \alpha \); \( \phi \) is \emph{valid} if it is valid in all the models.
Two formulas \( \phi_1 \) and \( \phi_2\) are \emph{\( \mathcal{M} \)-equivalent} if, for all assignments \( \alpha \), \( \mathcal{M}, \alpha \models \phi_1\) if and only if \( \mathcal{M}, \alpha \models \phi_2 \).

\section{Symbolic Automatic Relations}\label{sec:sar}
In this section, we introduce the notion of \emph{symbolic automatic relations}.
We first introduce the notion of a \emph{symbolic synchronous automaton}
in Section~\ref{sec:ssa}, which is a special kind of symbolic automaton~\cite{DBLP:conf/cav/DAntoniV17}, which serves as the representation of a symbolic automatic relation.
We then define symbolic automatic relations in Section~\ref{subsec:sar}.
For the sake of simplicity we consider symbolic automatic relations on
integer sequences (or, lists of integers).
It would not be
difficult to extend them to deal with
tree-structured data, by using (symbolic, synchronous)
tree automata; see also Remark~\ref{rem:trees}.

\subsection{Symbolic Synchronous Automata}
\label{sec:ssa}

We first extend the model \( \setZ \) by adding
the special padding symbol \( \pad \),
which will be used in the definition of symbolic synchronous automata.
\begin{definition}[Partial model for integer arithmetic]
A \emph{partial model for integer arithmetic} \( (\Zp, (f^{\Zp})_{f \in \fnint}, (P^{\Zp})_{P \in \pdint} )\) is a model over the signature \( \signature_{\mathrm{int} \cup \{ \pad \}} \defeq  (\{ \intty\}, \fnint, \pdint \cup \{ \ispad \})\), where
\begin{itemize}
\item the universe \( \Zp \) is \(\setZ \cup \{ \pad \}\), where
   \( \pad \notin \setZ \) is called a padding symbol,
    \item for every \( k \)-ary function symbol \( f \), \( f^{\Zp}(a_1, \ldots, a_k) \defeq f^{\setZ}(a_1, \ldots, a_k) \) if \( a_i \in \setZ \) for all \(1 \le i \le k\) and \( f^{\Zp}(a_1, \ldots, a_k) \defeq \pad \) otherwise, and
    \item for every \( P \in \pdint \), \( P^{\Zp} \defeq P^{\setZ} \) and \( \ispad^{\Zp} \defeq \{ \pad \}\).
\end{itemize}
By abuse of notation, we may write \( \Zp \) to denote the partial model for integers.
\end{definition}

\begin{remark}
  The semantics of the negation \(\neg\)
  is a little tricky for %
  the partial model.
  For example, the interpretation of \( x < y \) is different from \( \lnot (x \ge y) \):
  ``\(1 < \pad\)'' is false but ``\(\neg(1\ge \pad)\)'' is true.
\end{remark}

\begin{definition}[Symbolic synchronous automaton]
A \emph{\( k \)-ary symbolic synchronous nondeterministic finite automaton} with \( n \) parameters\footnotemark\  \(\seq{x} = x_0, \ldots, x_{n - 1} \) (\((k, n)\)-ary ss-NFA for short) is a quadruple \( M(\seq{x}) =(Q,I,F,\Delta) \) where
\footnotetext{The parameters \( \seq{x} \) are ``bound variables'' and we identify ``\( \alpha \)-equivalent'' automata.}
\begin{itemize}
  \item \( Q \) is a finite set of \emph{states},
  \item \( I \subseteq Q \) is the set of \emph{initial states},
  \item \( F \subseteq Q \) is the set of \emph{final states},
  \item \( \Delta \subseteq Q \times \Psi^{\seq{x}}_k \times Q \) is a finite set of \emph{transitions}.
  Here \( \Psi^{\seq{x}}_k\) is a subset of formulas over the signature \( \signature_{\mathrm{int} \cup \{ \pad \}} \) containing only formulas, whose free variables are among \( l_0, \ldots, l_{k - 1}, x_0, \ldots, x_{n - 1} \).
\end{itemize}
\skchanged{Intuitively, the free variables \( l_i \) are bound to the \( i \)-th input at each transition step and \( x_i \) are bound to integer values that do not change at each step.}
Formally, for \(\seq{j} = (j_0, \ldots, j_{n - 1}) \in \setZ^n \) and \( a = (i_0, \ldots, i_{k - 1}) \in \Zp^k \), an \emph{\( a \)-transition} of \( M(\seq{j}) \) is a transition \( q \trans{\varphi} q' \) such that \( \Zp, [\seq{x} \mapsto \seq{j}, l_0 \mapsto i_0, \ldots, l_{k - 1} \mapsto i_{k - 1}] \models \phi \).
This \( a \)-transition is denoted as \(q \trans{a}_{M(\seq{j})} q' \) (or \( q \trans{a} q' \) when \( M(\seq{j}) \) is clear from the context).

A \( (k, n) \)-ary ss-NFA \( M(\seq{x}) = (Q, I, F, \Delta) \) is \emph{effective} if
\begin{align*}
\{ (i_0, \ldots, i_{k -1}, j_0, \ldots, j_{n - 1}) \in \Zp^k \times \setZ^n \mid \Zp, [\seq{l} \mapsto \seq{i}, \seq{x} \mapsto \seq{j}] \models \phi \}
\end{align*}
is a decidable set for all \(q \trans{\phi}q' \in \Delta \).
We sometimes call a \((k,0)\)-ary ss-NFA just a \(k\)-ary ss-NFA.
\end{definition}

\skchanged{
The existence of parameters allows us to use ss-NFAs as representations of relations that take not only words but also integers as arguments.
}
\begin{definition}[Language of ss-NFA]
Let \( M(\seq{x}) = (Q, I, F, \Delta) \) be a \( (k, n) \)-ary ss-NFA and \( \seq{j}  \in \setZ^n \).
A word \( w \in (\Zp^k)^* \) is \emph{accepted} by \( M(\seq{j}) \) if
\begin{itemize}
  \item \( w = \emptyword \) and \( I \cap F \neq \emptyset \), or
  \item \( w = a_1 \cdots a_m \) and for all \( 1 \le i \le m \), there exist transitions \( q_i \trans{a_i}_{M(\seq{j})} q_{i+1} \) such that \( q_1\in I \) and \( q_{m+1} \in F \).
\end{itemize}
A word accepted by \( M(\seq{j}) \) is called an \emph{accepting run of \( M(\seq{j}) \)}.
The \emph{language accepted by \( M(\seq{j}) \)}, denoted \( \lang(M(\seq{j})) \), is the set of words accepted by \( M(\seq{j}) \).
We also write \( \lang(M(\seq{x})) \) for the relation \( \{ (w, \seq{j}) \mid w \in \lang(M(\seq{j})) \} \).
\end{definition}

\begin{example}
\label{ex:automata-for-sorted-and-nth}
Consider the ss-NFAs in Fig.~\ref{fig:automata-for-running-example}.
The automaton \( M_1 \) is formally defined as a 2-ary ss-NFA without parameter \( M_1 \defeq (\{ q \}, \{ q \}, \{ q \}, \{(q, \lnot(l_0 > l_1), q) \} ))  \).
The automaton \( M_2(x) \) is a 3-ary ss-NFA, with one parameter \( x \), defined by \( M_2(x) \defeq (\{ q_0, q_1 \}, \{ q_0\}, \{q_1\},  \Delta ) \), where \( \Delta \defeq \{ (q_0, l_0 = l_1 + 1, q_0), (q_0, l_0 = 0 \wedge l_2 = x ,q_1), (q_1, \top, q_1) \} \); the automaton \( M_3(x) \) can be formally described in a similar manner.
These automata will be used to define predicates \( \pred{sorted}(X) \) and \( \pred{nth}(i, x, X)\) later in this section.

\skchanged{The acceptance language of \( M_2(x)\), i.e.~\( \lang(M_2(x)) \subseteq  \Zp^3 \times \setZ \), is given as}
\begin{align*}
  \left\{ \left(
     \left[
        \begin{array}{c}
            a_{01} \\
            a_{11} \\
            a_{21}
        \end{array}
    \right]
  \cdots
  \left[
    \begin{array}{c}
      a_{0n} \\
      a_{1n} \\
      a_{2n}
    \end{array}
  \right]
  , j \right) \, \middle| \,
  \begin{aligned}
   \exists m.\ &1 \le m \le n \land  a_{0m} = 0 \land  a_{2m} = j   \\
  &\land \ (\forall i .\ 1 \le i < m \implies a_{0i} = a_{1i} + 1)
  \end{aligned}
\right\}.
\end{align*}
\qed
\end{example}

 \begin{figure}[t]
 \begin{minipage}{0.18\linewidth}
 \fbox{\( M_1 \)}
 \par
 \centering
 \begin{tikzpicture}
 \node[state, initial, accepting] (q) {\(q\)};
 \draw (q) edge[loop above] node{\( \lnot (l_0 > l_1) \)} (q);
 \end{tikzpicture}
 \end{minipage}
\begin{minipage}{0.38\linewidth}
\fbox{\( M_2(x)\)}
 \par
 \centering
\begin{tikzpicture}
\node[state, initial] (q0) {\( q_0 \)};
\node[state, accepting,right of=q0, xshift=1cm] (q1) {\( q_1 \)};
\draw (q0) edge[loop above] node{\( l_0 = l_1 + 1 \)} (q0)
       (q0) edge[above] node{\(l_0 = 0 \land  l_2 = x \)} (q1)
       (q1) edge[loop above] (q1);
\end{tikzpicture}
\end{minipage}
\begin{minipage}{0.42\linewidth}
\fbox{\( M_3(x) \)}
\par
\begin{tikzpicture}
\node[state, initial] (q0) {\( q_0 \)};
\node[state, accepting,right of=q0, xshift=1.2cm] (q1) {\( q_1 \)};
\node (label) at (1.6, -0.8) {\( \vee \ l_0 < 0 \)}; %
\draw (q0) edge[loop above] node{\( l_0 > 0 \wedge l_0 = l_1 + 1 \)} (q0)
      (q0) edge[below] node{\((l_0 = 0 \land  l_2 \neq x)\)} (q1)
      (q1) edge[loop above] (q1);
\end{tikzpicture}
\end{minipage}
\caption{Examples of ss-NFAs. }
\label{fig:automata-for-running-example}
\end{figure}
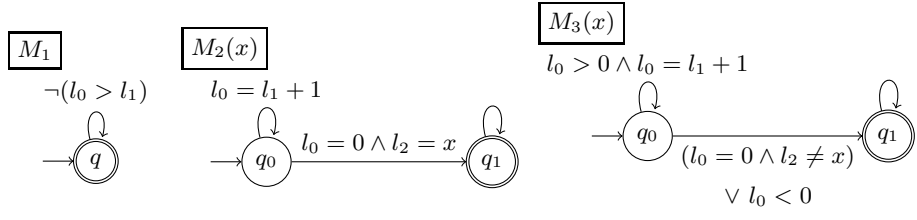

 We introduce some terminology on %
 ss-NFAs.
A \( (k, n) \)-ary ss-NFA \( M(\seq{x}) = (Q, I, F, \Delta)\) is \emph{deterministic} if \( |I|=1 \) and for all transitions  \( q \trans{\phi_1} q_1 \) and \(  q \trans{\phi_2} q_2 \), if \( \varphi_1\land\varphi_2 \) is satisfiable in \( \Zp \) then \( q_1 = q_2 \).
A state \( q \) of \( M(\seq{x}) \) is called \emph{complete} if for all \( a \in  \Zp^{k} \) and \( \seq{j} \in \setZ^n \) there exists an \( a \)-transition \( q \trans{a}_{M ( \seq{j})} q'\) for some \( q' \).
A ss-NFA \( M(\seq{x}) \) is \emph{complete} if all states of \( M(\seq{x}) \) are complete.

Since ss-NFA are just a special kind of
symbolic automata and symbolic automata can be determinized and completed, it can be shown that ss-NFAs are closed under Boolean operations using variants of the complement construction and the product construction of standard automata.\footnotemark
\footnotetext{The fact that determinization is possible and that symbolic automata are closed under boolean operations were originally shown for symbolic automata without parameters~\cite{VeanesHT10}, but the existence of parameters does not affect the proof.}
\begin{proposition}[Closure under boolean operations~\cite{VeanesHT10}]
\label{prop:snfa-closed-bolean}
Given \( (k, n) \)-ary ss-NFAs \( M_1(\seq{x}) \) and \( M_2(\seq{x}) \), one can effectively construct ss-NFAs \( M_1^c(\seq{x}) \) and \( (M_1\times M_2)(\seq{x}) \) such that \( \lang(M_1^c(\seq{x}))= (( \Zp^k )^* \times \setZ^n) \setminus \lang(M_1(\seq{x})) \) and \( \lang((M_1\times M_2)(\seq{x})) = \lang(M_1(\seq{x}))\cap \lang(M_2(\seq{x})) \).
Moreover, if \( M_1(\seq{x}) \) and \( M_2(\seq{x}) \) are effective, so are \( M_1^c(\seq{x}) \) and \( (M_1 \times M_2)(\seq{x}) \).
\qed
\end{proposition}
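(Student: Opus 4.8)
The plan is to adapt the textbook automata constructions --- the product construction for intersection, and determinization followed by completion and flipping of the acceptance condition for complement --- to the symbolic, parametric setting, relying on the fact from \cite{VeanesHT10} that symbolic automata can be determinized. Throughout, the crucial invariant is that every guard we build stays inside \( \Psi^{\seq{x}}_k \), i.e.\ is a formula over \( \signature_{\mathrm{int} \cup \{\pad\}} \) whose free variables are among \( l_0, \ldots, l_{k-1}, x_0, \ldots, x_{n-1} \), and that for each fixed parameter valuation \( \seq{j} \in \setZ^n \) the construction specializes to the ordinary parameter-free case. The two claimed language identities over \( (\Zp^k)^* \times \setZ^n \) then follow by ranging over all \( \seq{j} \).

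For intersection, write \( M_1(\seq{x}) = (Q_1, I_1, F_1, \Delta_1) \) and \( M_2(\seq{x}) = (Q_2, I_2, F_2, \Delta_2) \), and define \( (M_1 \times M_2)(\seq{x}) \defeq (Q_1 \times Q_2, I_1 \times I_2, F_1 \times F_2, \Delta) \) where \( \Delta \defeq \{ ((q_1,q_2),\, \varphi_1 \wedge \varphi_2,\, (q_1',q_2')) \mid (q_1,\varphi_1,q_1') \in \Delta_1,\ (q_2,\varphi_2,q_2') \in \Delta_2 \} \). Since \( \varphi_1, \varphi_2 \in \Psi^{\seq{x}}_k \) share the same free variables, so does \( \varphi_1 \wedge \varphi_2 \); moreover, an \( a \)-transition of the product on parameter \( \seq{j} \) exists exactly when both components have one, because \( \Zp, \alpha \models \varphi_1 \wedge \varphi_2 \) iff \( \Zp, \alpha \models \varphi_1 \) and \( \Zp, \alpha \models \varphi_2 \). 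A straightforward induction on word length then gives \( \lang((M_1 \times M_2)(\seq{j})) = \lang(M_1(\seq{j})) \cap \lang(M_2(\seq{j})) \), using \( F_1 \times F_2 \) for the final condition and \( I_1 \times I_2 \) for the empty-word case. This step is routine.

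The substantive step is the complement, which requires determinizing \( M_1(\seq{x}) \). Following the symbolic subset construction, I would take the states to be \( 2^{Q_1} \), the initial state to be \( I_1 \), and the final states to be those \( S \) with \( S \cap F_1 \neq \emptyset \). For the transitions out of \( S \), let \( G_S \) be the finite set of guards occurring on transitions leaving states of \( S \), and consider the \emph{minterms} over \( G_S \), i.e.\ the formulas \( \mu = \bigwedge_{\varphi \in G_S} \psi_\varphi \), with each \( \psi_\varphi \) being \( \varphi \) or \( \neg\varphi \), that are satisfiable in \( \Zp \); for each such \( \mu \) add a transition \( S \trans{\mu} S' \) with \( S' \defeq \{ q' \mid \exists q \in S.\ (q, \varphi, q') \in \Delta_1 \text{ and } \mu \text{ entails } \varphi \} \). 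The minterm with every conjunct negated sends \( S \) to the sink \( \emptyset \), which loops on \( \top \), so the resulting automaton is both deterministic and complete. Here I would note that, although \( \neg \) interacts subtly with \( \pad \) at the level of atomic predicates (cf.\ the remark above), negation remains classical at the level of the model, \( \Zp, \alpha \models \neg\varphi \) iff \( \Zp, \alpha \not\models \varphi \), so for each parameter value the minterms genuinely partition the alphabet \( \Zp^k \) and the construction is correct. Complementation then amounts to flipping acceptance: let \( M_1^c(\seq{x}) \) be this deterministic complete automaton with final states \( \{ S \mid S \cap F_1 = \emptyset \} \).

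Finally, effectiveness is preserved in both constructions: every new guard is a Boolean combination of guards of \( M_1(\seq{x}) \) (and \( M_2(\seq{x}) \)), so the set of \( (i_0, \ldots, i_{k-1}, j_0, \ldots, j_{n-1}) \) it defines is the corresponding Boolean combination of the decidable sets defined by the original guards, hence decidable. I expect the main obstacle to be purely expository: spelling out the symbolic determinization --- the minterm bookkeeping and its correctness --- in the presence of parameters. Since the only role of the parameters \( \seq{x} \) is to appear as extra fixed free variables in the guards, treated uniformly with the \( l_i \) by every step above, the correctness argument for each fixed \( \seq{j} \) is identical to the parameter-free proof of \cite{VeanesHT10}, and the two language identities follow by taking the union over all \( \seq{j} \in \setZ^n \).
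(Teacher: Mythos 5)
Your proposal is correct and follows essentially the same route the paper takes: the paper gives no explicit proof, deferring to the constructions of Veanes et al.~for symbolic automata (product for intersection; minterm-based determinization, completion, and acceptance-flipping for complement) with a footnote observing that parameters do not affect the argument, and your write-up is exactly a careful elaboration of those constructions with the parameters \( \seq{x} \) treated as additional fixed free variables in the guards. Your handling of the two potentially delicate points --- that negation is classical at the level of the satisfaction relation in \( \Zp \) (so minterms partition the alphabet despite the padding symbol), and that effectiveness is preserved because Boolean combinations of decidable sets are decidable --- is also sound.
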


\subsection{Symbolic Automatic Relations}
\label{subsec:sar}

A symbolic automatic relation (SAR) is basically an acceptance language of a ss-NFA, but not every acceptance language of a ss-NFA is a SAR.
Recall that a run of a \( k \)-ary ss-NFA is a word \( w \in (\Zp^k)^* \) and thus it does not necessarily correspond to tuples of words over \( \setZ \) since the ``padding symbol'' can appear at any position of \( w \).
In order to exclude such ``invalid inputs'', we first define the convolution operation, which converts a tuple of words to a word of tuples.

\begin{definition}[Convolution]
\label{def:convolution}
Given \( k \) words \( w_0,...,w_{k - 1} \in \setZ^* \), with \( w_i = a_{i1}\cdots a_{i l_i} \) and \( l=\max(l_0,...,l_{k - 1}) \), the \emph{convolution of words} \( w_0,...,w_{k - 1} \), denoted as \( \conv(w_0,...,w_{k -1}) \), is defined by
\begin{align*}
    \conv(w_0,...,w_{k - 1}) \defeq
    \left[
        \begin{array}{c}
            a_{01}' \\
            \vdots \\
            a_{(k - 1)1}'
        \end{array}
    \right]
        \cdots
    \left[
        \begin{array}{c}
            a_{0l}' \\
            \vdots \\
            a_{(k -1)l}'
        \end{array}
    \right] \in \left(\Zp^k\right)^*
    \quad \text{ and } \quad \conv() \defeq \emptyword
\end{align*}
where \( a_{ij}' = a_{ij} \) if \( j \le l_i \) and \( a_{ij}' = \pad \) otherwise.
The padding symbol is appended to the end of some words \( w_i \) to make sure that all words have the same length.
\end{definition}

We write \( \convZ{k} \) for the set of convoluted words, i.e.~\( \convZ{k} \defeq \{\conv(w_0, \ldots, w_{k - 1}) \mid (w_0, \ldots, w_{k - 1}) \in (\setZ^*)^k \}\).
This set can be recognized by a ss-NFA.
\begin{proposition}
\label{prop:set-of-convoluted-words}
Let \( k \) and \( n \) be natural numbers.
Then there is a \( (k, n) \)-ary ss-NFA \( M(\seq{x}) \) such that \( \lang(M(\seq{x})) = \convZ{k} \times \setZ^n \).
\qed
\end{proposition}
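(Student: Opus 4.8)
The plan is to give an explicit construction of \( M(\seq{x}) \) that tracks, for each of the \( k \) input coordinates, whether that coordinate has already entered its padding region. First I would record the combinatorial characterization of \( \convZ{k} \) that drives the construction: a word \( w = a_1 \cdots a_m \in (\Zp^k)^* \), with \( a_t = (a_{0t}, \ldots, a_{(k-1)t}) \), lies in \( \convZ{k} \) if and only if (i) for every coordinate \( j \), once \( \pad \) appears it never disappears, i.e.\ \( a_{jt} = \pad \) implies \( a_{j(t+1)} = \pad \), and (ii) the last letter \( a_m \) contains at least one non-pad entry. Condition (i) says that the padding of each coordinate forms a suffix, which is exactly how \( \conv \) appends \( \pad \), while (ii) enforces that \( m = \max_j l_j \), so that no purely-padding letter is tacked on at the end. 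Both directions of this equivalence are routine: given \( w_0, \ldots, w_{k-1} \in \setZ^* \) one checks directly that \( \conv(w_0, \ldots, w_{k-1}) \) satisfies (i) and (ii); conversely, from a \( w \) satisfying (i)--(ii) one reads off each \( w_j \) as the maximal non-pad prefix of coordinate \( j \) and verifies \( \conv(w_0, \ldots, w_{k-1}) = w \).

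Next I would build the ss-NFA (assuming \( k \ge 1 \)). I take the state set \( Q = \{ S \mid S \subseteq \{0, \ldots, k-1\} \} \), where a state \( S \) records the set of coordinates that have already entered the padding region; the unique initial state is \( \emptyset \), and the set of final states is \( F = \{ S \mid S \subsetneq \{0, \ldots, k-1\} \} \), which encodes condition (ii) through the last-visited state. For each pair \( S \subseteq S' \) I add a transition \( S \trans{\varphi_{S'}} S' \) labelled by \( \varphi_{S'} \defeq \bigwedge_{j \in S'} \ispad(l_j) \wedge \bigwedge_{j \notin S'} \lnot \ispad(l_j) \), which holds of a letter exactly when its set of padding coordinates equals \( S' \). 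Since these labels are Boolean combinations of \( \ispad \) atoms over \( l_0, \ldots, l_{k-1} \), they lie in \( \Psi^{\seq{x}}_k \) and define decidable sets, so \( M(\seq{x}) \) is an effective \( (k,n) \)-ary ss-NFA; note that the parameters \( \seq{x} \) do not occur in any label, so \( \lang(M(\seq{j})) \) is independent of \( \seq{j} \).

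Then I would prove \( \lang(M(\seq{j})) = \convZ{k} \) for every \( \seq{j} \). For \( \convZ{k} \subseteq \lang(M(\seq{j})) \), given \( w \) satisfying (i)--(ii) the run \( S_0 = \emptyset, S_1, \ldots, S_m \) with \( S_t = \{ j \mid a_{jt} = \pad \} \) is forced by the labels, is non-decreasing by (i), ends in a proper subset by (ii), and is therefore accepting, while the empty word is handled by \( \emptyset \in I \cap F \). Conversely, any accepting run traverses a chain \( \emptyset = S_0 \subseteq \cdots \subseteq S_m \) (non-decreasing because every transition requires \( S \subseteq S' \)) whose final state is a proper subset; the labels then force the read word to satisfy (i) and (ii), so \( w \in \convZ{k} \). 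Combining this equality with the fact that \( \seq{x} \) is unused yields \( \lang(M(\seq{x})) = \{(w, \seq{j}) \mid w \in \lang(M(\seq{j}))\} = \convZ{k} \times \setZ^n \).

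The construction is essentially routine, so I do not expect a genuine obstacle; the one point that requires care is condition (ii). Merely enforcing that padding is a per-coordinate suffix would also accept words ending in an all-\( \pad \) letter, which are not convolutions, and this is precisely why the final states are the \emph{proper} subsets of \( \{0, \ldots, k-1\} \) rather than all of \( Q \). Finally I would dispose of the degenerate case \( k = 0 \) separately, where \( \convZ{0} = \{ \emptyword \} \): the general recipe fails because \( \{0, \ldots, k-1\} \) is empty and would leave \( F = \emptyset \), so instead I would take the single-state automaton with no transitions whose unique state is both initial and final, which accepts exactly \( \emptyword \) and hence the desired language \( \{\emptyword\} \times \setZ^n \).
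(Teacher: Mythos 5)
Your proof is correct. The paper actually states this proposition with no proof at all (it ends immediately with a qed mark, the fact being regarded as routine), and your construction --- states \( S \subseteq \{0,\ldots,k-1\} \) recording which coordinates have entered their padding region, monotone transitions labelled by Boolean combinations of \( \ispad(l_j) \), and final states restricted to \emph{proper} subsets so that the last letter of a nonempty word has a non-\( \pad \) entry --- is precisely the standard argument the authors implicitly rely on, with appropriate extra care for the end-of-word condition and the degenerate case \( k = 0 \).
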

Because of this proposition, there is not much difference between ss-NFAs that only take convoluted words as inputs and ss-NFAs that take any word \( w \in (\Zp^k)^* \) as inputs.
Given a ss-NFA \( M(\seq{x})\), we can always restrict the form of inputs by taking the product with the automaton that recognizes \( \convZ{k} \times \setZ^n \).

\begin{definition}[Symbolic automatic relation]

A relation \( \relR \subseteq (\setZ^*)^k \times \setZ^n \) is called a \emph{ \( (k, n) \)-ary symbolic automatic relation} (SAR) if \( \{(\conv(w_0, \ldots, w_{k - 1}), \seq{j}) \mid (w_0, \ldots, w_{k -1}, \seq{j}) \in \relR \}  = \lang(M(\seq{x})) \) for some \( (k, n) \)-ary ss-NFA \( M(\seq{x}) \); in this case, we say that \( \relR \) is recognized by \( M(\seq{x}) \).

Given a \( (k, n)\)-ary ss-NFA \( M(\seq{x}) \), the \emph{\( (k, n) \)-ary SAR represented by \( M(\seq{x})\)}, denoted as \( \relR(M(\seq{x})) \), is defined as \( \{ (w_0, \ldots, w_{k - 1}, \seq{j} ) \mid (\conv(w_0, \ldots, w_{k - 1}), \seq{j}) \in \lang(M(\seq{x})) \}\).
Note that \( \relR(M(\seq{x})) \) is indeed a SAR because \( \relR(M(\seq{x})) \) is recognized by the product of \( M(\seq{x}) \) and the automaton that recognizes \(\convZ{k} \times \setZ^n \).
\end{definition}

\subsection{Expressing Predicates on Lists}
\label{sec:sar:predicate}
We demonstrate that various predicates over lists can be expressed as logical formulas obtained by extending the signature \( \siglint \) with SARs.
Moreover, we show that those predicates belong to
a class of formulas called \emph{\( \sigmaosar \)-formulas}.
We are interested in \( \sigmaosar \)-formulas because, as we shall see in Section~\ref{sec:experiments}, checking whether a simple \( \sigmaosar \)-formula is satisfiable can often be done automatically.

Henceforth, we allow SARs to appear in the syntax of formulas.
Formally, we consider formulas over \( \sigsar\), where \( \sigsar \) is defined as
the signature obtained by adding predicate symbols of the form \( R_{M(\seq{x})}\), which we also call SAR, to the signature \( \siglint \).
Here the subscript \( M(\seq{x}) \) represents ss-NFAs.
In what follows, the term ``formula'' means
a formula over the signature \( \sigsar \), unless the signature is explicitly specified.
The predicate symbols \( R_{M(\seq{x})} \) are interpreted symbols.
We consider a fixed model \( \mathcal{M} \) in which every predicate symbol of the form \( R_{M(\seq{x})} \) is interpreted as the symbolic automatic relation represented by \( M(\seq{x}) \) and other symbols are interpreted as in \( \Mlint \).

\begin{definition}
  A formula \( \phi \) is a \emph{\( \deltazsar \)-formula} if
  one can effectively construct
  a formula \( R_{M(\seq{x})}(\seq{T}, \seq{t})\) (where \( R_{M(\seq{x})}\) is a SAR)
  that is \( \model \)-equivalent to \( \phi \).
A formula \( \phi \) is a \emph{\( \sigmaosar \)-formula} if one can effectively construct a formula of the form \( \exists \seq{x} \exists \seq{X} \phi_0\) that is \( \model \)-equivalent to \( \phi \) and \( \phi_0 \) is a \( \deltazsar \)-formula.
We say that a formula \( \phi \) is a \emph{\( \deltaosar \)-formula} if both \( \phi \) and \( \lnot \phi \) are \( \sigmaosar \)-formulas.
\end{definition}

\begin{example}
\label{ex:sotrted-and-nth}
Let us consider the predicate \( \pred{sorted}(X) \), which holds
just if \( X \) is sorted in ascending order.
The predicate \( \pred{sorted} \) can be  defined
as a \( \deltazsar \)-formula,
by \( \pred{sorted}(X) \defeq R_{M_1}(X, \tail{X}) \),
where \( M_1 \) is the ss-NFA used in Example~\ref{ex:automata-for-sorted-and-nth}.

The predicate \( \pred{nth}(i, x, X) \), meaning that ``the \( i \)-th element of \( X \) is \( x \)'', can be defined as a \( \deltaosar \)-formula.
To show this, we use the automata \( M_2 \) and \( M_3 \) used in Example~\ref{ex:automata-for-sorted-and-nth}.
We can define \( \pred{nth} \) by \( \pred{nth}(i, x, X) \defeq \exists Y.\ R_{M_2(x)}(\ins{i}{Y}, Y, X, x) \).
In this definition, the list represented by \( Y \) works as a ``counter''.
Suppose that \( \ins{i}{Y}\) is interpreted as \( w_0 \) and assume that the first \( n \) transitions were all \(q_0 \trans{l_0 = l_1 + 1} q_0\).
Then we know that the list \( w_0 \) must be of the form \( [i, (i - 1),  \ldots (i - n), \ldots]  \), which can be seen as a decrementing counter starting from \( i \).
The transition from \( q_0 \) to \( q_1 \) is only possible when the counter is \( 0 \) and this allows us to ``access to the \( i \)-th element'' of the list represented by \( X \).
The negation of \( \pred{nth}(i, x, X) \) can be defined as \(\exists Y. R_{M_3(x)}(\ins{i}{Y}, Y, X, x) \).
Using the same technique, we can define the predicate \( \pred{length}(X, i) \) (``the length of \( X \) is \( i \)'') as a \( \deltaosar \)-formula.
\qed
\end{example}

The following proposition and example are useful for constructing new examples of \( \sigmaosar \)-formulas.

\begin{proposition}
\label{prop:deltao-deltaz-boolean-closedness}
The class of \( \deltazsar \)-formulas and \( \deltaosar \)-formulas are closed under boolean operations.
\end{proposition}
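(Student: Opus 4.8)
The plan is to reduce everything to a single core fact: the class of $\deltazsar$-formulas is closed under negation, conjunction, and disjunction. Granting this, closure of $\sigmaosar$- and $\deltaosar$-formulas follows by prenex manipulation and De~Morgan's laws, as I explain below. So the substance of the argument lives entirely at the $\deltazsar$ level, where a $\deltazsar$-formula is by definition $\model$-equivalent to a single atom $R_{M(\seq{x})}(\seq{T}, \seq{t})$, and I can manipulate the underlying ss-NFA.

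For the core fact, suppose $\phi_1$ and $\phi_2$ are $\model$-equivalent to $R_{M_1(\seq{x})}(\seq{T_1}, \seq{t_1})$ and $R_{M_2(\seq{x})}(\seq{T_2}, \seq{t_2})$. For negation I take the complement automaton $M_1^c(\seq{x})$ of Proposition~\ref{prop:snfa-closed-bolean}. Because $\conv$ is a total map on tuples of words, $\conv(w_0,\ldots,w_{k-1})$ always lies in $(\Zp^k)^*$, so complementing the language complements the represented relation exactly, giving $\relR(M_1^c(\seq{x})) = ((\setZ^*)^{k}\times\setZ^{n}) \setminus \relR(M_1(\seq{x}))$; hence $R_{M_1^c(\seq{x})}(\seq{T_1},\seq{t_1})$ is $\model$-equivalent to $\lnot\phi_1$, and the delicate interaction between $\lnot$ and $\pad$ flagged in the earlier remark is confined to, and already handled inside, the determinize-and-complete step of that construction. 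For conjunction the two atoms are applied to different argument tuples with different arities, so before taking a product I put them over a common tuple: let $\seq{T} = \seq{T_1},\seq{T_2}$ and $\seq{t} = \seq{t_1},\seq{t_2}$ be the concatenations, of list-arity $k=k_1+k_2$ and parameter-arity $n=n_1+n_2$. I cylindrify each automaton to this arity, obtaining $\widehat{M_1}(\seq{x})$ whose transition formulas mention only $l_0,\ldots,l_{k_1-1}$ and the first $n_1$ parameters (the remaining tracks being ``don't care''), and symmetrically $\widehat{M_2}(\seq{x})$ using the last $k_2$ tracks and last $n_2$ parameters. Proposition~\ref{prop:snfa-closed-bolean} then yields $(\widehat{M_1}\times\widehat{M_2})(\seq{x})$ recognizing the intersection, so that $R_{(\widehat{M_1}\times\widehat{M_2})(\seq{x})}(\seq{T},\seq{t})$ is $\model$-equivalent to $\phi_1\land\phi_2$; disjunction follows either by dualizing as $\lnot(\lnot\phi_1\land\lnot\phi_2)$ or by the analogous union product. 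Effectiveness is preserved throughout by the ``moreover'' clause of Proposition~\ref{prop:snfa-closed-bolean}.

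For $\sigmaosar$-formulas I use prenexing. Writing $\phi_i\equiv\exists\seq{x_i}\exists\seq{X_i}.\psi_i$ with each $\psi_i$ a $\deltazsar$-formula, and renaming the two quantifier blocks so that they are disjoint and capture no free variables, both $\phi_1\land\phi_2$ and $\phi_1\lor\phi_2$ are $\model$-equivalent to $\exists\seq{x_1}\seq{x_2}\exists\seq{X_1}\seq{X_2}.(\psi_1\ast\psi_2)$ with $\ast\in\{\land,\lor\}$, whose matrix is $\deltazsar$ by the core fact; hence $\sigmaosar$ is closed under $\land$ and $\lor$. (It is deliberately not claimed to be closed under $\lnot$ — that is exactly why $\deltaosar$ is introduced.) Finally, a formula is $\deltaosar$ iff it and its negation are both $\sigmaosar$: closure under $\lnot$ is immediate from this symmetric definition together with the $\model$-equivalence of $\lnot\lnot\phi$ and $\phi$; and if $\phi_1,\phi_2$ are $\deltaosar$ then $\phi_1\land\phi_2$ is $\sigmaosar$ while $\lnot(\phi_1\land\phi_2)\equiv\lnot\phi_1\lor\lnot\phi_2$ is $\sigmaosar$ (each $\lnot\phi_i$ being $\sigmaosar$ and $\sigmaosar$ being closed under $\lor$), so $\phi_1\land\phi_2$ is $\deltaosar$; the case of $\lor$ is symmetric.

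The step I expect to be the main obstacle is the cylindrification in the conjunction case, and the subtlety is \emph{trailing padding}. Once the shorter words among $\seq{T_1}$ terminate, their tracks carry $\pad$ while the longer words in $\seq{T_2}$ are still being read, so $\widehat{M_1}(\seq{x})$ must keep accepting all-$\pad$ columns on its own tracks beyond the point where $M_1(\seq{x})$ would normally halt. Thus cylindrification is not a pure renaming of track variables: I would add, at each final state, self-loops guarded by $\ispad(l_0)\land\cdots\land\ispad(l_{k_1-1})$ (and dually for $\widehat{M_2}$), and rely on the product with the convolution recognizer of Proposition~\ref{prop:set-of-convoluted-words} to keep inputs well-formed convolutions. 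Checking that this padding-absorbing modification recognizes \emph{exactly} the intended cylindrified relation — neither accepting ill-padded words nor rejecting a genuine member whose image under $\conv_k$ has extra trailing $\pad$ columns — is the one place that requires care.
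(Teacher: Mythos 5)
Your overall route is the same as the paper's: reduce everything to closure of \( \deltazsar \)-formulas via Propositions~\ref{prop:snfa-closed-bolean} and~\ref{prop:set-of-convoluted-words}, then handle \( \sigmaosar \)/\( \deltaosar \) by renaming and merging existential blocks plus De Morgan; your treatment of that second half is correct and matches the paper (which does exactly this prenexing for conjunction). The gap is in the one step you yourself flagged: the cylindrification for conjunction, as you describe it, does not recognize the intended relation. The problem is that a \( \deltazsar \)-formula is interpreted through \( \relR(M) \), which constrains \( \lang(M) \) only on genuine convolutions; \( \lang(M_1) \) itself may contain words with all-\( \pad \) columns, read by \( M_1 \)'s \emph{own} transitions, and your padding-absorbing loops let those transitions fire on columns where the other component is still reading real letters. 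Concretely, let \( M_1 \) be the unary ss-NFA with the single transition \( q_0 \trans{\ispad(l_0)} q_f \) (\( q_0 \) initial, \( q_f \) final). Since \( \conv(w) \) for \( w \in \setZ^* \) never contains a \( \pad \), we have \( \relR(M_1) = \emptyset \), so \( \phi_1 \defeq R_{M_1}(X) \) is \( \model \)-equivalent to \( \bot \). Let \( M_2 \) be the unary automaton accepting everything and \( \phi_2 \defeq R_{M_2}(Y) \). Your construction---pad-absorbing self-loops at final states, product, then intersection with the \emph{binary} convolution recognizer---accepts the well-formed convolution \( \conv(\emptyword, \ls{5}) \), which is the single column \( (\pad, 5) \), because \( \widehat{M_1} \) can fire \( M_1 \)'s own transition \( q_0 \trans{\ispad(l_0)} q_f \) on that column. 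Hence the constructed atom is satisfied by \( X = \emptyword,\ Y = \ls{5} \), whereas \( \phi_1 \wedge \phi_2 \) is unsatisfiable: the constructed formula is not \( \model \)-equivalent to the conjunction.

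The missing idea---and the role Proposition~\ref{prop:set-of-convoluted-words} actually plays in the paper's (much terser) proof---is to normalize each constituent \emph{before} cylindrifying: replace \( M_i \) by its product with the \( k_i \)-ary convolution recognizer, so that \( \lang(M_i) \subseteq \convZ{k_i} \times \setZ^{n_i} \). After this normalization your construction becomes exact: an accepting run of \( \widehat{M_1} \) on \( \conv(\seq{w_1}, \seq{w_2}) \) projects (by deleting self-loop steps) to an accepting run of \( M_1 \) on \( \conv(\seq{w_1}) \) followed by \( i \ge 0 \) all-\( \pad \) columns; such a word lies in \( \convZ{k_1} \) only when \( i = 0 \), so normalization forces \( \conv(\seq{w_1}) \in \lang(M_1) \), i.e., membership in \( \relR(M_1) \), and the converse direction is immediate from the self-loops. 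Taking the product with the convolution recognizer only at the combined arity \( k \), as you propose, cannot repair this, as the counterexample above shows (there the combined input is a perfectly well-formed binary convolution). Your negation case, by contrast, is fine without any normalization, since complementing \( \lang(M_1) \) complements \( \relR(M_1) \) exactly.
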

\begin{proof}
The fact that \( \deltazsar \)-formulas are closed under boolean operations follows from the fact that ss-NFAs are closed under boolean operations (Prop.~\ref{prop:snfa-closed-bolean}) and that the set of convoluted words is a language accepted by a ss-NFA (Prop.~\ref{prop:set-of-convoluted-words}).

By the definition of \( \deltaosar \)-formulas, \( \deltaosar \)-formulas are clearly closed under negation.
Given formulas \( \exists \seq{x} \exists \seq{X} \phi_1\) and \( \exists \seq{y} \exists \seq{Y} \phi_2\), where \( \phi_1 \) and \( \phi_2 \) are \( \deltazsar \)-formulas, \( (\exists \seq{x} \exists \seq{X} \phi_1) \wedge ( \exists \seq{y} \exists \seq{Y} \phi_2) \) is equivalent to \( \exists \seq{x}\seq{y} \exists \seq{X} \seq{Y}( \phi_1 \wedge \phi_2)\).
Since \( \phi_1 \wedge \phi_2 \) is a \( \deltazsar \)-formula, it follows that \( \deltaosar \)-formulas are closed under conjunction.
\qed
\end{proof}

\begin{example}
\label{ex:arith-and-eq-are-delta0}
Every arithmetic formula, i.e.~a formula over the signature \( \sigint \), is a \( \deltazsar \)-formula.
Given an arithmetic formula \( \phi \) whose free variables are \( x_0, \ldots, x_{k - 1}\), we can construct a \( \model \)-equivalent formula \( R_{M_{\phi}}([x_0], \ldots, [x_{k - 1 }]) \), where \( [x_i] \defeq \ins{x_i}{\nil}\) and \( M_{\phi} \defeq ( \{ q_0, q_1\}, \{q_0 \}, \{ q_1 \}, \{ (q_0, \phi[l_0/x_0, \ldots, l_{k - 1}/x_{k-1}], q_1) \}) \).
Similar transformation works even if a formula of the form \( \hd{X} \) appears inside a formula \( \phi \) of type \( \intty \).

Equality relation on two lists is also a \( \deltazsar \)-formula because it can be described by a ss-NFA. \qed
\end{example}

Thanks to Proposition~\ref{prop:deltao-deltaz-boolean-closedness} and Example~\ref{ex:arith-and-eq-are-delta0}, we can now write various specification over lists as (negations of) closed \( \sigmaosar \)-formulas.
For example, consider the following formula that informally means ``if the head element of a list sorted in ascending order is greater or equal to 0, then all the elements of that list is greater or equal to 0'':
\begin{align*}
\phi \defeq\forall x.\  \forall i.\  \forall X.\ X \neq \nil \wedge \pred{sorted}(X) \wedge \hd{X} \ge 0 \wedge \pred{nth}(i, x, X) \implies x \ge 0
\end{align*}
The negation of \( \phi \) is a  \( \sigmaosar \)-formula because \( \pred{sorted}(X) \) and \( \pred{nth}(i, x, X)\) are \( \deltaosar \)-formulas as we saw in Example~\ref{ex:sotrted-and-nth}.
Note that the validity of \( \phi \) can be checked by checking that \( \lnot \phi \) is unsatisfiable, which can be done by a satisfiability solver for \( \sigmaosar \)-formulas.

\subsection{An Application to ICE-Learning-Based CHC Solving with Lists}
\label{sec:application-to-CHC}
We now briefly discuss how a satisfiability solver for \( \sigmaosar \)-formulas may be used in the teacher part of ICE-learning-based CHC solvers for lists.
As mentioned in Section~\ref{sec:intro},
ICE-learning-based CHC solvers~\cite{ChampionCKS20,DBLP:journals/pacmpl/EzudheenND0M18} consist of a learner, which constructs a candidate model, and a teacher, which checks whether the candidate is indeed a model, i.e., whether the candidate model satisfies every clause.
Each clause is of the form
\begin{align*}
\forall \seq{x}.\, \forall \seq{X} . A_1\land \cdots \land A_n \implies B
\end{align*}
where \(A_1,\ldots,A_n\) and \(B\) are either primitive constraints or
atoms of the form \( P(t_1,\ldots,t_k) \) where \( P \) is a predicate variable.
Assuming that the learner returns an assignment \( \theta \) of
\(\deltaosar\)-formulas to predicate variables,
the task of the teacher is to check that 
\begin{align*}
\varphi := \forall \seq{x}.\, \forall \seq{X}. \theta A_1\land \cdots \land \theta A_n \implies \theta B
\end{align*}
is a valid formula for each clause
\(\forall \seq{x}.\, \forall \seq{X}. \theta A_1\land \cdots \land \theta A_n \implies \theta B\).
The negation of \(\varphi\) can be expressed as a closed \(\sigmaosar\)-formula
\(\exists \seq{x}.\, \exists \seq{X}. R_{M(\seq{y})}(\seq{T}, \seq{t})\).
By invoking a satisfiability solver for a \(\sigmaosar\)-formulas we can check whether \( R_{M(\seq{y})}(\seq{T}, \seq{t}) \) is unsatisfiable, which is equivalent to checking if \( \phi \) is valid.
If \( R_{M(\seq{y})}(\seq{T}, \seq{t}) \) is satisfiable, then \( \varphi \) is invalid.
In this case, the teacher should generate a counterexample against \(\varphi\).

\begin{example}
\label{ex:chc-for-sorted}
Let us consider the following set of constrained horn clauses:
\begin{align*}
\begin{array}{l}
    P(X) \Leftarrow X = nil \lor X = \fsymb{cons}(x,nil) \\
    P(\fsymb{cons}(x,\fsymb{cons}(y,X))) \Leftarrow x \leq y \land P(\fsymb{cons}(y,X)) \\
    Q(\fsymb{cons}(0,X)) \Leftarrow \top \quad\;
    Q(\fsymb{cons}(x,X)) \Leftarrow Q(X) \quad\;
    \bot \Leftarrow P(\fsymb{cons}(1,X)) \land Q(X)
\end{array}
\end{align*}
A model of this set of CHCs is \( P(X) \mapsto \pred{sorted}(X), Q(X) \mapsto \pred{hasZero}(X) \), where \( \pred{sorted} \) is the predicate we have seen in Example~\ref{ex:sotrted-and-nth} and \( \pred{hasZero}(X) \) is a predicate that holds if \( 0 \) appears in the list \( X \).
It is easy to check that \( \pred{hasZero}\) is a \( \deltazsar \)-formula.%
Hence, if there is a learner part, which is yet to be implemented, that provides \( \pred{sorted} \) and \( \pred{hasZero} \) as a candidate model, then a satisfiability solver for \( \sigmaosar \)-formulas can check that this candidate model is a valid model.
\qed
\end{example}

\begin{remark}
  As discussed later in Section~\ref{sec:reduction}, the satisfiability problem for
  SARs is solved by a reduction to CHC solving without data structures.
  Thus, combined with the translation above,
  we translate a part of the problem of solving CHCs with data structures
  to the problem of solving CHCs without data structures.
  This makes sense because solving CHCs without data structures is
  easier \emph{in practice} (although the problem is undecidable in general, even
  without data structures). One may wonder why we do not directly translate
  CHCs with data structures to those without data structures, as in
  \cite{DBLP:journals/tplp/AngelisFPP18a,DeAngelisFPP20}. The detour through
  SARs has the following advantages. First, it provides a uniform, streamlined
  approach thanks to the closure of SARs under Boolean operations. Second,
  SARs serve as certificates of the satisfiability of CHCs.
\end{remark}

\section{Undecidability Result}
\label{sec:undecidabiltity}
This section shows that the satisfiability problem for SARs is undecidable in general.
The \emph{satisfiability problem for SARs} is the problem of deciding whether there is an assignment \( \alpha \) such that \( \model, \alpha \models \phi \), given a \( \sigsar \)-formula \( \phi \).
\changed{We prove that the satisfiability problem is undecidable even for the class of \(\deltazsar\)-formulas,}
by reduction from the halting problem for two-counter machines.

\newcommand{\Line}{\mathbf{Line}}
\newcommand{\Code}{\mathbf{Code}}
\newcommand{\Inst}{\mathbf{Inst}}
\newcommand{\inc}[2]{\mathbf{inc}(#1, #2)}
\newcommand{\jzdec}[3]{\mathbf{jzdec}(#1, #2, #3)}
\newcommand{\halt}{\mathbf{halt}}

\begin{definition}[Minsky's two-counter machine~\cite{Minsky61}]
\emph{Minsky's two-counter machine} consists of (i) two integer registers \( r_0 \) and \( r_1 \), (ii) a set of instructions, and (iii) a program counter that holds a non-negative integer.
Intuitively, the value of the program counter corresponds to the line number of the program currently being executed.

A \emph{program} is a pair \( P = (\Line, \Code) \), where \( \Line \) is a finite set of non-negative integers such that \( 0 \in \Line \) and \( \Code \) is a map from a finite set of non-negative integers to \( \Inst \), the set of \emph{instructions}.
The set \( \Inst \) consists of:
\begin{itemize}
  \item \( \inc{i}{j} \) : Increment the value of register \( r_i \) and set the program counter to \( j \in \Line \).
  \item \( \jzdec{i}{j}{k} \) : If the value of register \( r_i \) is positive, decrement it and set the program counter to \( j \in \Line \). Otherwise, set the program counter to  \( k \in \Line\).
  \item \( \halt \) : Stop operating.
\end{itemize}
Initially, \( r_0 \), \( r_1 \) and the program counter are all set to \( 0 \).
Given a program \( P = (\Line, \Code)\) the machine executes \( \Code(i) \), where \( i \) is the value of the program counter until it executes the instruction \( \halt \).
\end{definition}

Given a program \( P \) we simulate its execution using a formula of the form \( R_{M}(X_0, X_1, \tail{X_0}, \tail{X_1})\).
The states and edges of \( M \) are used to model the control structure of the program, and the list variables \( X_0 \) and \( X_1 \) are used to model the ``execution log'' of registers \( r_0 \) and \( r_1 \), respectively. 

\begin{theorem}[Undecidability of satisfiability of SARs]
  Given a \( k \)-ary symbolic automatic relation \( R_M \) represented by an effective \( (k,0) \)-ary ss-NFA \( M \) and list terms \( T_1, \ldots ,T_k \),
\changed{it is undecidable whether \( R_M(T_1, \ldots,T_k) \) is satisfiable.}
\end{theorem}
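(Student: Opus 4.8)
The plan is to reduce the halting problem for Minsky's two-counter machines to the satisfiability of a formula of the form \( R_M(X_0, X_1, \tail{X_0}, \tail{X_1}) \), following the hint given just before the statement. Given a program \( P = (\Line, \Code) \), I would construct an effective \( (4,0) \)-ary ss-NFA \( M \) so that \( R_M(X_0, X_1, \tail{X_0}, \tail{X_1}) \) (whose free variables are the list variables \( X_0, X_1 \)) is satisfiable if and only if \( P \) halts. The idea is to let an assignment of \( X_0, X_1 \in \setZ^* \) encode the \emph{execution log} of \( P \): if the machine runs through configurations with register contents \( (v_0^{(0)}, v_1^{(0)}), \ldots, (v_0^{(m)}, v_1^{(m)}) \), then \( X_0 = \ls{v_0^{(0)}, \ldots, v_0^{(m)}} \) and \( X_1 = \ls{v_1^{(0)}, \ldots, v_1^{(m)}} \). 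The crucial trick is to feed both \( X_i \) and \( \tail{X_i} \) to the relation, so that at the \( t \)-th synchronous step the automaton reads the tuple \( (l_0, l_1, l_2, l_3) = (v_0^{(t)}, v_1^{(t)}, v_0^{(t+1)}, v_1^{(t+1)}) \); it can thus inspect a configuration together with its successor and check that the step is consistent with some instruction of \( P \).

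Concretely, I would take the states of \( M \) to be (a copy of) \( \Line \) together with a fresh accepting state \( q_f \), and declare the line \( 0 \) to be the single initial state. I translate each instruction into transitions whose guards are formulas in linear arithmetic extended with \( \ispad \): an increment \( \inc{0}{j} \) at line \( \ell \) becomes \( \ell \trans{l_2 = l_0 + 1 \wedge l_3 = l_1} j \) (and symmetrically for \( r_1 \)); a conditional decrement \( \jzdec{0}{j}{k} \) becomes the two transitions \( \ell \trans{l_0 > 0 \wedge l_2 = l_0 - 1 \wedge l_3 = l_1} j \) and \( \ell \trans{l_0 = 0 \wedge l_2 = l_0 \wedge l_3 = l_1} k \); and a \( \halt \) at line \( \ell \) becomes \( \ell \trans{\ispad(l_2) \wedge \ispad(l_3)} q_f \). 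Finally, to pin down the all-zero initial configuration, I attach the conjunct \( l_0 = 0 \wedge l_1 = 0 \) to every transition out of state \( 0 \). Since every guard is a linear-arithmetic-plus-\( \ispad \) formula, its satisfying set is decidable, so \( M \) is effective as the theorem requires.

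The correctness argument splits into two directions and hinges on the padding behaviour at the last position. By Definition~\ref{def:convolution}, when \( X_0, X_1 \) have length \( m+1 \) the word \( \conv(X_0, X_1, \tail{X_0}, \tail{X_1}) \) also has length \( m+1 \), and at its final position the third and fourth components are \( \pad \) (because each \( \tail{X_i} \) is one shorter). Hence an accepting run reads genuine successor tuples at steps \( t = 0, \ldots, m-1 \) and the tuple \( (v_0^{(m)}, v_1^{(m)}, \pad, \pad) \) at step \( m \); the \( \halt \)-transition into \( q_f \) can fire exactly when the program counter has reached a \( \halt \) line, precisely capturing termination. For the forward direction, the unique halting run of \( P \) yields logs \( X_0, X_1 \) on which \( M \) has an accepting run, so the formula is satisfiable. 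For the backward direction, since \( P \) is deterministic the \( \jzdec \) guards are mutually exclusive, so any accepting run on any \( X_0, X_1 \) is forced to track the genuine computation of \( P \) from the all-zero configuration up to a \( \halt \) instruction; therefore \( P \) halts.

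I expect the main obstacle to be the bookkeeping at the boundary: making the interaction between \( \tail \), the convolution padding, and the final \( \halt \)-transition line up exactly, and then arguing rigorously that no spurious accepting run can exist on an ill-formed log (for example a log of the wrong length, or one in which a guard is accidentally satisfiable because of an unexpected \( \pad \)). The delicate point is the zero-test against the partial-model semantics in which comparisons and equalities involving \( \pad \) are false: I must check that at a non-\( \halt \) final configuration every increment and decrement guard fails (because \( l_2 = l_0 + 1 \) and the like cannot hold when \( l_2 = \pad \)), so the run gets stuck rather than being accepted, while a \( \halt \) line correctly reaches \( q_f \). Once this case analysis is settled, the remainder is a routine faithful-simulation argument.
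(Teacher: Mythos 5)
Your proposal is correct and takes essentially the same route as the paper's proof: the same execution-log encoding of the registers into \(X_0, X_1\), the same key trick of passing both \(X_i\) and \(\tail{X_i}\) so that the automaton reads each configuration together with its successor, and virtually identical transition guards, including the \(\ispad\)-guarded transition into the fresh accepting state for the halt instruction. The only difference is your explicit conjunct \(l_0 = 0 \wedge l_1 = 0\) on transitions leaving the initial state, which pins down the all-zero initial configuration; the paper's construction leaves this constraint implicit, so your version is, if anything, slightly more careful on that point.
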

\begin{proof}
\newcommand{\final}{q_{\mathrm{accept}}}
We show that for a given program \( P  = (\Line, \Code )\), we can effectively construct a SAR \( R_{M_P} \) that satisfies ``\( P \) halts iff there are assignments for \(X_0 \)  and \( X_1 \) that satisfy \( R_{M_P}(X_0,X_1,\tail{X_0},\tail{X_1})\)''.
\changed{Intuitively, \(X_i\) denotes the ``execution log of \(r_i\)'',
i.e., the sequence of values taken by \(r_i\) in a terminating execution sequence
of \(P\), and \( R_{M_P}\) takes as arguments
both \(X_i\) and \(\tail{X_i}\) to check that \(X_i\) represents a valid sequence.}
The ss-NFA \( M_P \) is defined as \( (Q,I,F,\Delta) \), where \( Q \defeq \{q_i \mid i \in \Line \} \cup \{ \final \}\), \( I \defeq \{q_0\} \), and \( F \defeq \{ \final \} \).
We define the set of transitions
\changed{so that \(M_P\) has a transition \( q_i \trans{(c_0,c_1,c'_0,c'_1)}q_j \) iff
the two-counter machine has a transition from the configuration \((i,c_0,c_1)\)
(where \(i\) is the current program pointer and \(c_i\) is the value of \(r_i\)) to
\((j,c'_0,c'_1)\).}
We also add transition from \( q_i \) to the final state \( \final \) if \( \Code(i) = \halt\).
Formally, \( \Delta \) is defined as the smallest set that satisfies the following conditions:
\begin{itemize}
    \item \( (q_i,l'_{r} = l_r + 1 \land l_{1-r} = l_{1-r},q_j)\in \Delta \) if  \(\Code(i) = \inc{r}{j} \).
    \item \( (q_i,l_r > 0 \land l'_{r} = l_r - 1 \land l'_{1-r} = l_{1-r},q_j),(q_i,l_r = 0 \land l'_0 = l_0 \land l'_1 = l_1,q_k)\in\Delta \) if \( \Code(i) = \jzdec{r}{j}{k} \).
    \item \( (q_i, \ispad(l'_0) \land \ispad(l'_1),q_{accept})\in\Delta \) if \( \Code(i) = \halt \).
\end{itemize}
Here, we have written \(l'_0\) and \(l'_1\) for \(l_2\) and \(l_3\).

Based on the intuitions above, it should be clear that 
\( P \) halts and the execution log of \( r_i \) obtained by running \( P \) is \( w_i \),
if and only if
 \( M_P \) accepts \( \conv(w_0,w_1, w'_0, w'_1) \), where \( w'_i\) is the tail of \( w_i\).
Thus, \(P\) halts if and only if \(R_{M_P}\) is satisfiable.
Since the halting problem for two-counter machines is undecidable, so is the satisfiability of
\(R_{M}\).
\comment{
 Our goal is to show that ``\( P \) halts and the execution log of \( r_i \) obtained by running \( P \) is \( w_i \)'' \( \Leftrightarrow \) ``\( M_P \) accepts \( \conv(w_0,w_1, w'_0, w'_1) \), where \( w'_i\) is the tail of \( w_i\)''.
 We only show the left-to-right direction; the opposite direction can be proved similarly.

($\Rightarrow$)
Suppose that \( P \) halted after executing the sequence of instructions \( \Code(i_0), \Code(i_1), \ldots, \Code(i_n)\), where \( i_0 = 0\) and \(\Code(i_n) = \halt \).
Moreover, assume that, at the \( l \)-th step of the execution, the value set to \( r_0 \) and \( r_1 \)  were \( j_{0l}\) and \(j_{1l}\), respectively.
Then define \( w_r \defeq j_{r0} \cdots j_{rn} \) and \( w'_r \defeq j_{r1} \cdots j_{rn}\) for \( r \in \{ 0, 1 \}\).
We show that \( \conv(w_0, w_1, w'_0, w'_1)\) is an accepting run.
Let \(0 \le m \le l-1\).
Then \( \Code(i_m) \neq \halt \) and, because of the way \( \Delta \) is constructed, there exist the transitions \( q_{i_m} \trans{(j_{0m},j_{1m},j_{0(m+1)},j_{1(m+1)})}_{M_P} q_{i_{m+1}} \).
If \( m = l \), then \( \Code(i_m) = \halt \).
So there is a transition \(q_{i_m} \trans{(j_{0m}, j_{1m} \pad, \pad)} \final \).

Since the halting problem for two-counter machine is undecidable, the satisfiability problem for symbolic automatic relations is undecidable.
}
\qed
\end{proof}

\section{Reduction to CHC Solving}
\label{sec:reduction}
This section describes the reduction from the satisfiability problem of SARs to CHC solving, whose constraint theory is mere integer arithmetic.
Precisely speaking, the reduction works for a fragment of \( \sigsar \)-formulas, namely the \( \sigmaosar \)-formulas.
This section starts with a brief overview of the reduction.
We then give the formal definition and prove the correctness of the reduction.

\subsection{Overview}
Let us first present intuitions behind our reduction using an example.
{%
\newcommand{\qformula}{\exists x \exists X .\pred{nth}(i, x, X)}
\newcommand{\formula}{\pred{nth}(i, x, X)}
Consider the predicate \( \pred{nth} \) we defined in Example~\ref{ex:sotrted-and-nth}.
Let \(i\) be a non-negative integer constant, and
suppose that we wish to check whether \(\formula\) is satisfiable,
i.e., whether \( \qformula \) holds.
We construct a set of CHCs \( \Pi \) such that \( \model \models \qformula \)
iff \( \Pi \) is unsatisfiable.
That is, we translate the \( \deltazsar\)-formula \( R_{M_2(x)}(\ins{i}{Y}, Y, X, x) \) preserving its satisfiability.
The following CHCs are obtained by translating \( R_{M_2(x)}(\ins{i}{Y}, Y, X, x) \).
\begin{align}
    \sttoprd{q_0}(v_0, v_1, v_2, x) &\Leftarrow v_0 = i  \label{eq:intro:initial} \\
    \sttoprd{q_0}(v_0, v_1, v_2, x) &\Leftarrow \label{eq:intro:q0-q0} \sttoprd{q_0}(u_0, u_1, u_2, x)\wedge u_0 = u_1 + 1 \wedge v_0 = u_1 \wedge \lnot \ed \\
    \sttoprd{q_1}(v_0, v_1, v_2, x) &\Leftarrow \sttoprd{q_0}(u_0, u_1, u_2, x) \wedge u_0 = 0 \wedge u_2 = x \wedge v_0 = u_1 \wedge \lnot \ed \label{eq:intro:q0-q1} \\
    \sttoprd{q_1}(v_0, v_1, v_2, x) &\Leftarrow \sttoprd{q_1}(u_0, u_1, u_2, x) \wedge v_0 = u_1 \wedge \lnot \ed \label{eq:intro:q1-q1} \\
    \bot &\Leftarrow \sttoprd{q_1}(u_0, u_1, u_2, x) \wedge \ed \label{eq:intro:final}
\end{align}
Here \( \ed \defeq \bigwedge_{j \in \{0, 1, 2\}} \ispad(u_j) \).
The predicate \( \sttoprd{q} \) corresponds to the state \( q \) and \( \sttoprd{q}(v_0, v_1, v_2, x)\) intuitively means 
``there exists an assignment \(\alpha\) for \(X,Y\) and \(x\) such that
  given \((\ins{i}{\alpha(Y)},\alpha(Y),\alpha(X))\) as input,
  \(M_2(\alpha(x))\) visits state \(q\), with the next input letters being \((v_0,v_1,v_2)\)''.
The clause \eqref{eq:intro:initial} captures the fact that
\(M_2(\alpha(x))\) is initially at state \(q_0\), with the first element \(v_0\) of the initial input
is \( i \).
The clauses \eqref{eq:intro:q0-q0}, \eqref{eq:intro:q0-q1},
and \eqref{eq:intro:q1-q1} correspond to transitions
\( q_0 \trans{l_0 = l_1 + 1} q_0\), \( q_0 \trans{l_1 = 0 \wedge l_2 = x} q_1\),
and \(q_1\trans{\top}q_1\) respectively.
The constraints in the bodies of those clauses consist of:
(i) the labels of the transitions (e.g., \(u_0=u_1+1\) in \(\eqref{eq:intro:q0-q0}\)),
(ii) the equation \( u_1 = v_0\),
which captures the co-relation
between the arguments
\( \ins{i}{Y} \) and \( Y \) of \( R_{M_2}(x)\),
(iii) \( \lnot \ed \) indicating that there is still an input to read.
The last clause \eqref{eq:intro:final} captures the acceptance condition:
a contradiction is derived if \(M_2(x)\) reaches the final state \(q_1\), having read all the inputs.
It follows from the intuitions above that
the set of CHCs above is unsatisfiable, if and only if,
\( R_{M_2(x)}(\ins{i}{Y}, Y, X, x) \) is satisfiable.
}
\subsection{Translation}
\label{sec:translation}
We now formalize the translation briefly discussed in the previous subsection.
To simplify the definition of the translation, we first define the translation for terms in a special form.
Then we will show that every term of the form \( R_{M}(T_1, \ldots, T_k, t_1, \ldots, t_n)\) can be translated into the special form, preserving the satisfiability (or the unsatisfiability).

\begin{definition}
\newcommand{\listset}{\mathcal{L}}
Let \( \listset \) be a set of list terms.
Then \( \listset \) is
\begin{itemize}
  \item
    \emph{cons-free} if for all \( T \in \listset\), \( T \) is of the form \( \nil \) or \( \tailn{n}{X} \).
  \item
    \emph{gap-free} if \( \tailn{n}{X} \in \listset \) implies \( \tailn{m}{X} \in \listset \) for all \( 0 \le m \le n\).
\end{itemize}
Here \( \tailn{m}{X} \) is defined by \( \tailn{0}{X} \defeq X \) and \( \tailn{m + 1}{X} \defeq \tail{\tailn{m}{X}} \).

We say that a formula of the form \( R_{M(\seq{x})}(T_1, \ldots, T_k, t_1, \ldots, t_n)\) is \emph{normal} if \( \{ T_1, \ldots, T_k \} \) is cons-free and gap-free, and every \( t_i \) is an integer variable.
\end{definition}

\begin{definition}
Let \( R_{M(\seq{x})} \) be a \( (k, n) \)-ary SAR, where \( M(\seq{x}) = (Q, I, F, \Delta) \), \(\seq{T} \defeq T_0, \ldots, T_{k - 1}\) be list terms and \( \seq{y} \defeq y_0, \ldots, y_{n - 1} \) be integer variables.
Suppose that \( R_{M(\seq{x})}(\seq{T}, \seq{y}) \) is normal.
Then the set of CHCs translated from \(R_{M(\seq{x})}(\seq{T}, \seq{y}) \), written \( \toCHC{R_{M(\seq{x})}(\seq{T}, \seq{y})} \),
consists of the following clauses:
\begin{enumerate}
\item
  The clause (written \( \toCHC{p \trans{\phi} q} \)):
  \begin{align*}
    \sttoprd{q}(v_0, \ldots, v_{k - 1}, \seq{x}) \Leftarrow
    \begin{gathered}
      \sttoprd{p}(u_0, \ldots, u_{k - 1}, \seq{x}) \wedge \phi[u_0/l_0, \ldots, u_{k - 1}/ l_{k - 1}] \\
      \wedge \shift \wedge \padding \wedge \lnot \ed
    \end{gathered}
  \end{align*}
  for each \( p \trans{\phi} q \in \Delta\).
\item The clause
  \begin{align*}
    \sttoprd{q}(v_0, \ldots, v_{k - 1}, \seq{x}) \Leftarrow \cnstnil \wedge \tailnil \wedge \seq{x} = \seq{y}
  \end{align*}
  for each \(q\in I\).
\item The clause:
  \begin{align*}
    \bot \Leftarrow \sttoprd{q}(u_0, \ldots, u_{k - 1}, \seq{x}) \wedge \ed
  \end{align*}
  for each \(q\in F\).
\end{enumerate}
Here the definitions and the informal meanings of \(\ed \), \( \shift \), \(\padding \), \( \cnstnil \) and \( \tailnil \) are given as follows:
\allowdisplaybreaks[3]
\begin{align*}
  \ed &\defeq \bigwedge_{i \in \{0, \ldots, k - 1 \}}  \ispad(u_i) \quad\qquad \text{``there is no letter to read''}  \\
  \shift &\defeq \bigwedge \{ v_i = u_j \vee (\ispad(v_i) \wedge \ispad(u_j)) \mid T_i = \tailn{m}{X}, T_j = \tailn{m+1}{X}\} \\
  & \begin{aligned}
      &\text{``the \( l\)-th  element of the list represented by \( \tailn{m+1}{X}\) is}  \\
      &\text{ the \( (l + 1)\)-th element of the list represented by \( \tailn{m}{X} \)''}  \\
  \end{aligned} \\
  \padding &\defeq \bigwedge_{i \in \{0, \ldots, k-1 \}} \ispad{(u_i)} \Rightarrow \ispad{(v_i)} \\
  &\text{``if a padding symbol \( \pad \) is read, then the next input is also \( \pad \)''} \\
  \cnstnil &\defeq \bigwedge \{ \ispad(v_i) \mid T_i = \nil \} \\
  &\text{``there is no letter to read from an empty list''} \\
  \tailnil &\defeq \bigwedge \{\ispad(v_i) \Rightarrow \ispad(v_j) \mid T_i = \tailn{m}{X},  T_j = \tailn{m + 1}{X} \} \\
  & \begin{aligned}
     &\text{``if there is no letter to read from the input represented by \( \tailn{m}{X} \)}  \\
     &\text{then there is nothing to read from the input represented by \( \tailn{m+1}{X} \)''} \!\\
  \end{aligned}
\end{align*}
\end{definition}

We next show that we can assume that \( R_{M(\seq{x})}(\seq{T}, \seq{t}) \) is normal without loss of generality.
First, observe that ensuring that \( \seq{T} \) is gap-free is easy.
If \( \seq{T} \) is not gap-free then we just have to add additional inputs, corresponding to the list represented by \( \tailn{n}{X} \), to the automaton \( M(\seq{x}) \) and ignore those inputs.
Ensuring that \( \seq{t} \) is a sequence of integer variables is also easy.
If \( t \) is not an integer variable, then we can embed \( t \) to the transitions of the automaton and add the free variables of \( t \) to the parameter or as inputs of the automaton. %
Therefore, the only nontrivial condition is the cons-freeness:
\begin{lemma}
\label{lem:cons-free}
  Let \( R_{M(\seq{x})} \) be a \( (k, n) \)-ary SAR, \( \seq{T} \defeq T_1, \ldots, T_k \) be list terms and \( \seq{t} \defeq t_1, \ldots, t_n \) be integer terms.
  Then we can effectively construct a \( (k, n + m) \)-ary ss-NFA \( M'(\seq{x}, \seq{x}') \), list terms \( \seq{T'} \defeq T'_1, \ldots, T'_k \) and integer terms \( \seq{t'} \defeq t'_1, \ldots, t'_{n + m} \)such that (1) \(R_{M(\seq{x})}(\seq{T}, \seq{t})\) is satisfiable in \( \model \) iff \( R_{M'(\seq{x}, \seq{x}')}(\seq{T'}, \seq{t'}) \) is satisfiable in \( \model \) and (2) \( \{ T'_1, \ldots, T'_k \} \) is cons-free.
\qed
\end{lemma}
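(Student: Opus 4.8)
The plan is to eliminate $\fsymb{cons}$ by \emph{pushing each cons-prefix into the automaton as a bounded preamble and lifting the finitely many integers that occur in that preamble to fresh parameters}.

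First I would normalize each $T_i$. Using the $\model$-valid identities $\tail{(\fsymb{cons}(t,T))} = T$ and $\tail{\nil} = \nil$, every list term is $\model$-equivalent to one of the form $\fsymb{cons}(s^{(i)}_1,\ldots,\fsymb{cons}(s^{(i)}_{p_i},B_i)\ldots)$ whose base $B_i$ is $\nil$ or $\tailn{m_i}{X_{(i)}}$, with the $s^{(i)}_j$ integer terms. Put $\delta_i \defeq p_i - m_i$ and $P \defeq \max_i p_i$. The central observation is that, writing $u_i$ for the value of $T_i$, the convolution $\conv(u_1,\ldots,u_k)$ factors as $C_1\cdots C_P \cdot \conv(v_1,\ldots,v_k)$ where $v_i$ is the value of $\tailn{P-\delta_i}{X_{(i)}}$, a \emph{tail of the original base variable}: for $c>P$ the $i$-th track of $\conv(u_1,\ldots,u_k)$ reads $X_{(i)}[\,c-\delta_i\,]$, which is exactly the $(c-P)$-th letter of $\tailn{P-\delta_i}{X_{(i)}}$ (note $P-\delta_i = P-p_i+m_i \ge 0$). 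Thus after the first $P$ columns the two words agree track by track, and tracks that share a base variable remain mutually aligned for free, because the convolution of several tails of one variable already supplies the correct relative offsets (the same mechanism the paper uses for $\tailn{m}{X}$ and $\tailn{m+1}{X}$).

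Accordingly I would feed the cons-free terms $T'_i \defeq \tailn{P-\delta_i}{X_{(i)}}$ (replacing each $X_{(i)}$ by a fresh variable standing for its deepest read tail, and turning its never-read leading letters into free parameters), and construct $M'$ so that it first simulates the preamble $C_1\cdots C_P$ and then behaves like $M$. The crucial point is that every entry of $C_1,\ldots,C_P$ is either a cons-element $s^{(i)}_j$ or one of the boundedly many leading letters $X_{(i)}[\,j\,]$ with $j\le P$; I lift all of these to fresh existential parameters $\seq{x}'$, which is precisely the growth from $n$ to $n+m$. The preamble then depends only on parameters, so it folds into the first transition of $M'$: from a new initial state the first transition carries a guard over $\seq{x}'$ recording which states $M$ can reach after reading $C_1\cdots C_P$. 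For those lifted letters that some track does read (the $X_{(i)}[\,j\,]$ with $j>P-D_{X_{(i)}}$, where $D_X\defeq\max\{\delta_i: X_{(i)}=X\}$) I intersect $M'$ with position checks equating the parameter with the appropriate letter of the corresponding track; letters that no track ever reads are simply left free, matching the fact that those leading letters of $X_{(i)}$ are unconstrained in the original problem. A final intersection with the recognizer of convoluted words (Prop.~\ref{prop:set-of-convoluted-words}), together with closure under intersection (Prop.~\ref{prop:snfa-closed-bolean}), keeps $M'$ effective.

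Satisfiability equivalence then follows from the evident bijection between solutions: from an assignment to the $X_{(i)}$ read off the parameters (leading letters and cons-elements) and the fresh tail variables, and conversely reconstruct each $X_{(i)}$ from its fresh tail variable together with its (verified or free) leading-letter parameters. I expect the main obstacle to be exactly the synchronization that would otherwise force the finite control to remember integer letters across steps: a track with a short prefix begins reading genuine list data while another is still inside its prefix, so the data letters arrive ``staggered,'' and over the infinite alphabet $\Zp$ such a buffer cannot be kept in the finite control. This is defused by the two devices above --- feeding tails of the \emph{same} base variable, so the relative offset comes from the convolution rather than from buffering, and promoting the boundedly many preamble letters to parameters, which stay constant throughout the run. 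The remaining work is routine but fiddly: the $\nil$ bases, and the degenerate runs in which a variable is shorter than $P$ so that $\pad$ already intrudes into the preamble region, are handled by adding the obvious $\ispad$ disjuncts to the preamble guard.
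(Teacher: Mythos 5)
Your two main devices---lifting the cons elements (and the boundedly many leading letters of shared base variables) to fresh parameters, and letting the convolution itself supply the relative offsets between tails of the same variable---are sound, and they are also the key ingredients of the paper's construction. The genuine gap is in the step where you shorten the input: you feed \( M' \) the word \( \conv(v_1,\ldots,v_k) \), which is \( P \) columns shorter than the word \( \conv(u_1,\ldots,u_k) \) read by \( M \), and you recover the missing columns by ``folding the preamble into the first transition of \( M' \).'' This fails precisely when the rest is empty, i.e.\ when every \( v_i = \emptyword \): by the definition of acceptance, an ss-NFA accepts \( \emptyword \) iff \( I' \cap F' \neq \emptyset \), a \emph{structural, parameter-independent} condition, whereas whether \( M \) accepts the preamble \( C_1\cdots C_P \) genuinely depends on the parameters. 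Concretely, take \( k = 1 \), \( T_1 = \fsymb{cons}(x, X) \), and let \( M \) consist of the single transition \( q_0 \trans{l_0 = 42} q_1 \) with \( q_0 \) initial and \( q_1 \) final, so that \( M \) accepts exactly the one-column words with entry \( 42 \). The original formula is satisfiable (take \( x = 42 \), \( X = \emptyword \)), and \emph{only} via assignments with \( X = \emptyword \); hence any satisfying assignment of your \( R_{M'(\seq{x}')}(X', x) \) must give \( M' \) the empty input, and acceptance is then decided structurally. Whichever way your construction sets \( I' \cap F' \), one of the two instances with guard \( l_0 = 42 \) versus guard \( l_0 = 42 \wedge l_0 \neq 42 \) breaks the ``satisfiable iff satisfiable'' equivalence, since the construction is syntactic and treats the two guards identically. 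Your proposed fix---\( \ispad \) disjuncts in the preamble guard---cannot help here, because on the empty input no transition is taken and no guard is ever evaluated.

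To repair this you would either have to \emph{decide}, at construction time, whether some assignment makes \( M \) accept the preamble alone (which imports a decidability assumption not present in the lemma, and is delicate anyway because the lifted parameters are correlated through shared free variables of the terms \( s^{(i)}_j \)), or avoid shortening the input altogether. The paper takes the second route: it removes one innermost cons at a time and \emph{keeps its column in place}, reading that column from a fresh list variable \( Y \) (when the base is \( \nil \)) or from the base variable after substituting \( \tail{X} \) for \( X \) everywhere else (when the base is \( \tailn{r}{X} \)), while the automaton---unrolled so that its first \( l \) steps lie on distinct states---simply ignores the track-\( i \) entry of that column by substituting the new parameter \( y \) for \( l_i \) in the guards of that step. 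Because the number of columns never changes, accepting runs of the new automaton are in step-by-step correspondence with those of the old one, and the empty-/short-input degeneracy that defeats your preamble-folding never arises.
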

Instead of giving a proof, we look at an example.
(The proof is in Appendix~\ref{appx:cons-free}.)
Consider the formula \( \phi \defeq R_{M_2(x)}(\ins{1}{\ins{t}{Y}}, \ins{0}{Y}, X, x) \), where \( M_2(x) \) is the automaton given in Example~\ref{ex:automata-for-sorted-and-nth}.
We explain how to remove \( \fsymb{cons}(t, \cdot) \) from the first argument; by repeating this argument we can remove all the ``cons''.
Let \( \phi' \defeq R_{M'_2(x, y)}(\ins{1}{Y},\allowbreak \ins{0}{\tail{Y}}, X, x, t)\), where \( M_2'(x, y) \) is the ss-NFA in Fig.~\ref{fig:example-cons-free}.
Then it is easy to see that \( \phi \) is satisfiable in \( \model \) iff \( \phi' \) is satisfiable in \( \model \).
If \( \model, \alpha \models \phi \) with \( \alpha(Y) = w \) and \( \sem{t}_{\model, \alpha} = i\) then \( \model, \alpha[Y \mapsto iw] \models \phi' \); the opposite direction can be checked in a similar manner.
The idea is to embed the information that ``the second element is \( t \)'' into the ss-NFA by replacing \( l_0 \) with \( y \) in the edges that corresponds to the ``second step'' and passing \( t \) as the actual argument.
Note that we had to ``unroll the ss-NFA \( M_2(x) \)'' to ensure that the transition that contains \( y \) is used at most once.

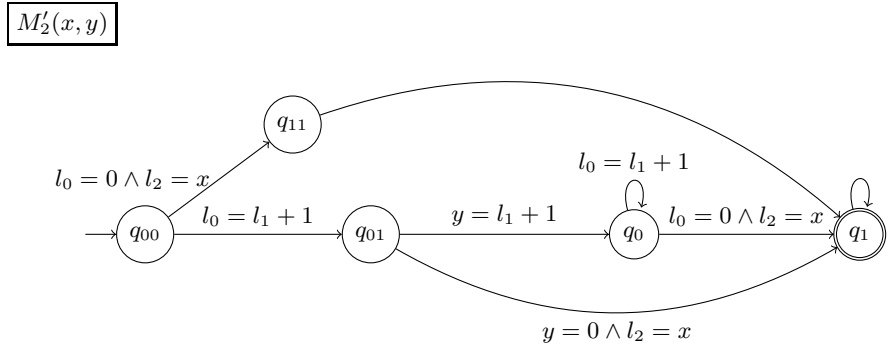
\begin{figure}[t]
\fbox{\( M'_2(x, y) \)}
\par
\centering
\begin{tikzpicture}
\node[state, initial] (q00) {\( q_{00} \)};
\node[state, right of=q00, xshift=1cm] (q01) {\( q_{01}\)};
\node[state, above right=of q00, yshift=-0.5cm] (q11) {\( q_{11} \)};
\node[state, right of=q01, xshift=1.5cm] (q0) {\( q_{0} \)};
\node[state, accepting,right of=q0, xshift=1cm] (q1) {\( q_1 \)};
\draw (q00) edge[above] node{\(l_0 = l_1 + 1 \)}  (q01)
      (q00) edge[left] node{\( l_0 = 0 \wedge l_2 = x\)}(q11)
      (q01) edge[above] node{\( y = l_1 + 1 \)} (q0)
      (q01) edge[bend right, below] node{ \( y = 0 \wedge l_2 = x\)} (q1)
      (q11) edge[bend left, above] node{}(q1)
      (q0) edge[loop above] node{\( l_0 = l_1 + 1 \)} (q0)
      (q0) edge[above] node{\(l_0 = 0 \land  l_2 = x \)} (q1)
      (q1) edge[loop above] (q1);
\end{tikzpicture}
\caption{The ss-NFA used to explain the idea behind Lemma~\ref{lem:cons-free}.}
\label{fig:example-cons-free}
\end{figure}

\begin{remark}
\label{rem:trees}
We have so far considered symbolic automatic relations on lists.
We expect that the reduction above can be extended to deal with
symbolic automatic relations on tree structures, as follows.
Let us consider a (symbolic, synchronous) bottom-up tree automaton,
with transitions of the form
\[\langle l_1,\ldots,l_k,x_1,\ldots,x_n\rangle(q_1,\ldots,q_m), \varphi \to q,\]
which means ``when the current node is labeled with
\(\langle l_1,\ldots,l_k,x_1,\ldots,x_n\rangle\) and the \(i\)-th child has been
visited with state \(q_i\), then the current node is visited with state \(q\)
if \(l_1,\ldots,l_k,x_1,\ldots,x_n\) satisfy \(\varphi\)''.
To reduce the satisfiability problem to CHCs on integers, it suffices to prepare
a predicate \(\sttoprd{q}\) for each state \(q\), so that
\(\sttoprd{q}(l_1,\ldots,l_k,x_1,\ldots,x_n)\) holds just if there exists an input that
allows the automaton to visit state \(q\) after reading
\(\langle l_1,\ldots,l_k,x_1,\ldots,x_n\rangle\).
As for the definition of ``normal form'', it suffices to replace \(\tail{T}\) with
\(\mathrm{child}_i(T)\) (which denotes the \(i\)-th child of tree \(T\)),
and define the cons-freeness and gap-freeness conditions accordingly.
The formalization of the extension is left for future work.
\end{remark}

\subsection{Correctness}
The correctness of the translation is proved by associating a derivation of \( \bot \) to an accepting run and vice versa.

We first define the notion of \emph{derivations} for CHCs, as a special case of the SLD resolution derivation~\cite{Kowalski74}.
Since the system of CHCs obtained by translating a \( \deltazsar \)-formula is \emph{linear}, which means that each clause contains at most one predicate in the body, we specialize the notion of derivations for linear CHCs.
\newcommand{\derive}[1]{\stackrel{#1}{\rightsquigarrow}}
\newcommand{\cnst}{\psi}
\newcommand{\dstate}[2]{\langle #1 \mid #2 \rangle}
\begin{definition}[Derivation]
A \emph{derivation state} (or simply a \emph{state}) is a pair \( \dstate{A}{ \cnst} \), where \( A \) is either \( \top \) or \( P(\seq{t})\), i.e.~an uninterpreted predicate symbol \( P \) applied to terms \( \seq{t}\), and \( \cnst \) is a constraint.
Let \( C \) be a linear constrained horn clause of the form \(P(\seq{t}_1) \Leftarrow A \wedge \cnst' \), where \( A \) is either \( \top \) or formula of the form \( Q(\seq{t}_2)\) and \( \cnst' \) is a constraint.
Then we write \( \dstate{P(\seq{t})}{\cnst} \derive{(C, \theta)} \dstate{\theta A}{\theta (\cnst \wedge \cnst') } \), if \( P(\seq{t}) \) and \( P(\seq{t}_1) \) are unifiable by a unifier \( \theta \).

Let \( \Pi \) be a system of linear CHCs, i.e.~a finite set of linear constrained horn clauses.
A \emph{derivation} from state \( S_0 \) with respect to \( \Pi \) is a finite sequence of the form \( S_0 \derive{(C_1, \theta_1)} S_1 \derive{(C_2, \theta_2)} \cdots \derive{(C_n, \theta_n)} S_n \) such that (i) \( C_i \in \Pi \) for all \( i \) and (ii) \( S_n \) is of the form \( \dstate{\top}{\cnst} \) such that \( \cnst \) is a constraint that is satisfiable in \( \Zp \).
\end{definition}

Now we are ready to prove the correctness of the translation. %

\begin{theorem}
\label{thm:correctness}
Let \( R_{M(\seq{x})} \) be a \( (k, n) \)-ary SAR, and suppose that \( R_{M(\seq{x})}(\seq{T}, \seq{y}) \) is normal.
Then \( R_{M(\seq{x})}(\seq{T}, \seq{y}) \) is satisfiable in \( \mathcal{M} \) iff \( \toCHC{R_{M(\seq{x})}(\seq{T}, \seq{y})} \) is unsatisfiable modulo \( \Zp \).
\end{theorem}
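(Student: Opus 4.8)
The plan is to establish the biconditional by setting up a precise correspondence between accepting runs of the automaton $M(\seq{x})$ on convoluted inputs and derivations of $\bot$ in the translated CHC system $\toCHC{R_{M(\seq{x})}(\seq{T}, \seq{y})}$. Recall that $R_{M(\seq{x})}(\seq{T}, \seq{y})$ is satisfiable in $\model$ iff there exist list values for the variables occurring in $\seq{T}$ and integer values for $\seq{x}$ (matching $\seq{y}$) such that the convolution of the interpreted arguments is accepted by the corresponding $M$. By the standard logic-programming semantics, $\toCHC{R_{M(\seq{x})}(\seq{T}, \seq{y})}$ is unsatisfiable modulo $\Zp$ iff there is a derivation of $\bot$ from the goal clause. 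So the theorem reduces to showing: there is an accepting run witnessing satisfiability of $R_{M(\seq{x})}(\seq{T},\seq{y})$ iff there is a derivation of $\bot$.

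First I would fix the key invariant that makes the predicates $\sttoprd{q}$ meaningful. I claim that $\sttoprd{q}(v_0, \ldots, v_{k-1}, \seq{j})$ is derivable (its least-model interpretation holds) precisely when there is an assignment to the list variables consistent with the gap-free/cons-free shape of $\seq{T}$, and a partial run of $M(\seq{j})$ ending in state $q$, such that the tuple $(v_0, \ldots, v_{k-1})$ is the letter to be read next. The clauses of group (2) seed this invariant at the initial states with the very first letters (the $\cnstnil$ and $\tailnil$ constraints encoding the boundary behavior at $\nil$ and tails), group (1) propagates it along each transition while $\shift$ enforces the co-relation between $\tailn{m}{X}$ and $\tailn{m+1}{X}$ arguments and $\padding$ enforces that padding, once begun, persists, and group (3) detects acceptance exactly when all components are padded (the run has consumed the entire convoluted word) at a final state. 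I would prove both directions of this invariant by induction: soundness (any derivation yields a genuine partial run) by induction on derivation length, and completeness (any partial run is mirrored by a derivation) by induction on run length.

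The forward direction ($R$ satisfiable $\Rightarrow$ $\bot$ derivable) then proceeds by taking an accepting run of $M(\seq{j})$ on $\conv$ of the witnessing lists, replaying it through the clauses of groups (2) and (1) to reach a state-predicate call whose argument tuple is all-padding at a final state $q \in F$, and firing the clause of group (3) to derive $\bot$; I would verify that the accumulated constraint is satisfiable in $\Zp$ by exhibiting the concrete integer values drawn from the lists. The backward direction ($\bot$ derivable $\Rightarrow$ $R$ satisfiable) inverts this: a derivation of $\bot$ must use exactly one group-(3) clause at the end, preceded by a chain of group-(1) clauses and one group-(2) clause, and the satisfiability of the final accumulated constraint in $\Zp$ lets me read off concrete letters for each step; assembling these letters column-by-column reconstructs lists whose convolution is accepted, where $\shift$ guarantees the reconstructed columns glue consistently into well-defined lists and $\padding$/$\tailnil$ guarantee the padding is a genuine suffix so the reconstructed words lie in $\convZ{k}$.

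The main obstacle I expect is bookkeeping the semantics of $\pad$ and the $\ispad$ predicate correctly, particularly the interplay between $\shift$, $\padding$, $\cnstnil$, and $\tailnil$. Because $\fsymb{head}$ and $\fsymb{tail}$ are total (returning $0$ and $\emptyword$ at $\emptyword$), and because the partial model treats negation delicately (as the earlier remark warns, $\varphi$ and $\lnot(\ldots)$ behave asymmetrically on $\pad$), I must check that the reconstructed lists have consistent lengths and that a padded column never precedes a non-padded one in any component corresponding to the same underlying list $X$ and its tails. Establishing that $\shift$ exactly captures the $\model$-level relationship $\hd{\tailn{m}{X}}$ versus $\hd{\tailn{m+1}{X}}$ and that the combination of constraints forces the reconstructed tuple to lie in $\convZ{k}$ is the crux; once the invariant and this well-formedness of reconstructed inputs are pinned down, both directions follow by the inductions sketched above.
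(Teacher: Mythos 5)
Your proposal is correct and follows essentially the same route as the paper's proof: both directions rest on a step-by-step correspondence between (partial) accepting runs on convoluted inputs and derivations in the translated CHC system, with \( \shift \), \( \padding \), \( \cnstnil \), \( \tailnil \), and \( \lnot\ed \) playing exactly the roles you assign them, and the backward direction reconstructing the witnessing lists column-by-column (via gap-freeness and cons-freeness) from a refutation whose final constraint is satisfiable in \( \Zp \). The only difference is packaging: you phrase the correspondence as a least-model invariant for the predicates \( \sttoprd{q} \) proved by induction in both directions, whereas the paper works operationally with SLD-resolution derivations, composed unifiers, and its Lemma~\ref{lem:derivation-properties}; for linear CHCs these two formalizations are interchangeable.
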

\begin{proof}[Sketch]
\newcommand{\formula}{R_{M(\seq{x})}(\seq{T}, \seq{y})}
\newcommand{\toterm}[1]{\underline{#1}}
We only sketch the proof; a detailed version is in Appendix~\ref{appx:correctness}.
Suppose that \( M(\seq{x}) = (Q, I, F, \Delta)\), \( \seq{T} = T_0, \ldots, T_{k - 1} \), \( \seq{y} = y_0, \ldots, y_{n - 1}\) and let \( \Pi \defeq \toCHC{\formula} \).
By the completeness of the SLD resolution, it suffices to show that \( \formula \) is satisfiable if and only if there is a derivation starting from \( \dstate{\sttoprd{q}(\seq{u}, \seq{x})}{\ed} \) with respect to \( \Pi \) for some \( q \in F \).
We separately sketch the proof for each direction.

(Only if)
Since \( \formula \) is satisfiable, there exists an assignment \( \alpha \) such that \( \model, \alpha \models \formula \).
Let \( w_i \defeq \sem{T_i}_{\model, \alpha}\) for each \( i \in \{0, \ldots, k - 1\}\), \( j_i \defeq \alpha(y_i)\) for \( i \in \{0, \ldots, n - 1 \} \) and \( \seq{j} \defeq j_0, \ldots, j_{n - 1} \).
Because \( \model, \alpha \models \formula \), we have an accepting run of \( M(\seq{j}) \),
\(q_0 \trans{a_0} q_1 \trans{a_2} \cdots \trans{a_{m - 1}} q_m \), where the run \( a_0a_2 \cdots  a_{m - 1} \) is \( \conv(w_0, \ldots, w_{k - 1}) \).
From this run, we can construct a derivation
\begin{align*}
  \dstate{\sttoprd{q_m}(\seq{u}, \seq{x})}{\!\ed}\!\! \derive{\xi_{m-1}} \!\! \dstate{\sttoprd{q_{m-1}}(a_{m - 1}, \seq{j})}{\cnst_{m - 1}} \! \derive{\xi_{m-2}} \!\!\! \cdots \! \derive{\xi_0} \! \dstate{\sttoprd{q_0}(a_0, \seq{j})}{\cnst_0} \! \derive{(C, \theta)} \! \dstate{\top}{\cnst}
\end{align*}
where \( \xi_i = (C_i, \theta_i)\).
Here, \( \sttoprd{q_{i}}(a_{i}, \seq{j}) \) means the predicate symbol \( \sttoprd{q_{i}} \) applied to constants that represent the elements of \( a_i \) and \( \seq{j} \).
In particular, the derivation can be constructed by taking the clause that corresponds to the transition \( q_{i} \trans{a_i} q_{i + 1}\) for the clause \( C_i \).

(If)
By assumption, there is a derivation
\begin{gather*}
\dstate{\sttoprd{q_m}(\seq{u}, \seq{x})}{\cnst_m}
\derive{(C_m, \theta_m)} \cdots
\derive{(C_1, \theta_1)}
\dstate{\sttoprd{q_0}(\seq{u}, \seq{x})}{\cnst_0} \derive{(C_0, \theta_0) } \dstate{\top}{\cnst}
\end{gather*}
where \( \cnst_m = \ed \).
We construct an accepting run of \( M(\seq{x}) \) using an assignment in \( \Zp \) and the unifiers.
Take an assignment \( \alpha \) such that \(\Zp, \alpha \models \cnst \), which exists because \( \cnst \) is satisfiable in \( \Zp \).
Let \( \theta_{\le i} \defeq \theta_0 \circ \theta_1 \circ \cdots \circ \theta_i \).
We define \( a_{ij} \in \Zp \), where \( 0 \le i \le m \) and \(0 \le j \le k - 1 \), by \( a_{ij} \defeq \sem{\theta_{\le i}(u_j)}_{\Zp, \alpha} \) and set \( a_i \defeq (a_{i0}, \ldots, a_{i k -1})\).
We also define \( j_i \defeq \sem{\theta_0(x_i)}_{\model, \alpha}\) and write \( \seq{j} \) for \( j_0, \ldots, j_n\).
Then we can show that \( a_0 a_1 \cdots a_{m - 1}\) is an accepting run of \( M(\seq{j}) \).
Moreover, we can show that \( a_0 a_1 \cdots a_{m - 1}\) can be given as a convolution of words \( \conv(w_0, \ldots, w_{k - 1}) \) by using the constraint \( \lnot \ed \) that appears in clauses corresponding to transition relations.
Finally, we can show that there is an assignment \( \beta \) in \( \model \) such that \( \sem{T_i}_{\model, \beta} = w_i \) for every \( i \in \{0, \ldots, k - 1\} \) by using the cons-freeness and gap-freeness, and the constraints \( \shift \), \( \cnstnil \) and \( \tailnil \).
\qed
\end{proof}

The correspondence between resolution proofs and accepting runs should allow us to generate a witness of the satisfiability of \( R_{M(\seq{x})}(\seq{T}, \seq{y}) \).
A witness of the satisfiability is important because it serves as a counterexample that the teacher part of an ICE-learning-based CHC solver provides
(cf.~\ref{sec:application-to-CHC}).
Since some CHC solvers like \eldarica\cite{HojjatRummer18} outputs a resolution proof as a certificate of
the unsatisfiability, it should be able to generate counterexamples by using these solvers as a backend of the teacher part.
The formalization and the implementation of this counterexample generation process are left for future work.

\section{Experiments}
\label{sec:experiments}

We have implemented a satisfiability checker for
\( \deltaosar \)-formulas.
An input of the tool consists of (i) definitions of \(\deltaosar\)-predicates 
(expressed using ss-NFA), and
(ii) a \(\deltaosar\)-formula consisting of the defined predicates,
list constructors and destructors, and integer arithmetic. For (i),
if a predicate is defined using existential quantifiers, 
both the definitions of a predicate and its negation should be provided
(recall \(\mathit{nth}\) in Example~\ref{ex:automata-for-sorted-and-nth}); \skchanged{we do not need to provide the predicates in normal forms because our tool automatically translates the inputs into normal forms.}
\skchanged{The current version of our tool only outputs SAT, UNSAT, or TIMEOUT and does not provide any witness of the satisfiability.}
We used Spacer\cite{KomuravelliGC16}, \hoice\cite{ChampionCKS20}, and \eldarica\cite{HojjatRummer18} as the backend CHC solver (to solve the CHC problems obtained by the reduction
in Section~\ref{sec:reduction}).
The experiments were conducted on a machine with AMD Ryzen 9 5900X 3.7 GHz and 32 GB of memory, with a timeout of 60 seconds.
The implementation and all the benchmark programs are available in the artifact\cite{ShimodaKSS21}.
The detailed experimental results are shown in Appendix~\ref{sec:exp-detail}.

We have tested our tool for three benchmarks.
\skchanged{All the ss-NFAs used in the benchmarks are effective; in fact, all the formulas appearing as the labels of transitions are formulas in quantifier-free linear integer arithmetic.}
The first benchmark ``IsaPlanner'' is obtained from the benchmark\cite{JohanssonDB10} of IsaPlanner\cite{DixonFleuriot03}, which is a proof planner for the interactive theorem prover Isabelle\cite{Paulson94}.
We manually converted the recursively defined functions used in the original benchmark into SARs. %
The second benchmark ``SAR\_SMT'' consists of valid/invalid formulas that represent properties of lists.

Each instance of the third benchmark ``CHC'' consists of (i) CHCs on data structures
and (ii) a candidate model
(given as a map from predicate variables to \( \deltaosar \)-formulas);
the goal is to check that the candidate is a valid model for the CHCs (which
is the task of the ``teacher'' part of ICE-based CHC solving~\cite{ChampionCKS20}).
This benchmark includes
CHCs obtained by a reduction
from the refinement type-checking problem for functional programs~\cite{HashimotoUnno15,UnnoKobayashi09,UnnoTS17,ZhuJagannathan13}.
For many of the instances in the benchmark set, symbolic automatic relations
are required to express models.
For example, the set of CHCs given in Example~\ref{ex:chc-for-sorted} is included in the benchmark.

To compare our tool with the state-of-the-art SMT solvers, Z3 (4.8.11)~\cite{MouraBjorner08} and CVC4 (1.8)~\cite{BarrettCDHJKRT11},
which support user-defined data types and recursive function definition,
we manually translated the instances to
SMT problems that use recursive functions on lists.
We tested two different translations. %
One is to translate the \( \deltaosar \)-predicates (such as \( \pred{nth} \))
directly into recursive functions by using \texttt{define-fun-rec},
and the other is to translate the predicates %
into assertions, like \texttt{(assert (forall ...))}, that describe the definition of functions.

\begin{table}[tb]
\centering
{
  \def\mc#1#2#3{\multicolumn{#1}{#2}{#3}}
  \caption{Summary of the experimental results}
  \label{tbl:exp}
  \begin{tabular}{lrrrrrrrrrrr}
    \toprule
    Benchmark                     & \quad\textbf{IsaPlanner} & \quad\textbf{SAR\_SMT} & \quad\textbf{CHC} & \quad\textbf{All} \\ \midrule
    \#Instances                   & 15 (15/0)                & 60 (47/13)             & \quad 12 (12/0)   & \quad  87 (74/13) \\
    \textbf{Ours-Spacer} \\
      \quad \#Solved              & 8 (8/0/0)                & \quad 43 (30/13/0)           & \quad 8 (8/0/0)         & \quad 59 (46/13/0)      \\
      \quad Average time          & 0.995                    & 0.739                  & 1.981             & 0.942             \\
    \textbf{Ours-\hoice} \\
      \quad \#Solved              & 14 (14/0/1)              & 55 (42/13/0)           &\quad 11 (11/0/0)       & 80 (67/13/1)      \\
      \quad Average time          & 7.296                    & 4.498                  & 6.584             & 5.275             \\
    \textbf{Ours-Eldarica} \\
      \quad \#Solved              & 14 (14/0/0)              & 59 (46/13/2)           & 12 (12/0/0)       & 85 (72/13/2)      \\
      \quad Average time          & 4.539                    & 2.441                  & 11.078            & 4.006             \\
    \textbf{Z3 (rec)} \\
      \quad \#Solved              & 5 (5/0/0)                & 32 (19/13/0)           & 1 (1/0/0)         & 38 (25/13/0)      \\
      \quad Average time          & 0.023                    & 0.022                  & 0.017             & 0.022             \\
    \textbf{CVC4 (rec)} \\
      \quad \#Solved              & 5 (5/0/0)                & 32 (19/13/0)           & 3 (3/0/0)         & 40 (27/13/0)      \\
      \quad Average time          & 0.014                    & 0.015                  & 0.050             & 0.017             \\
    \textbf{Z3 (assert)} \\
      \quad \#Solved              & 7 (7/0/0)                & 20 (20/0/0)            & 3 (3/0/0)         & 30 (30/0/0)       \\
      \quad Average time          & 0.018                    & 0.018                  & 0.022             & 0.019             \\
    \textbf{CVC4 (assert)} \\
      \quad \#Solved              & 6 (6/0/0)                & 19 (19/0/0)            & 3 (3/0/0)         & 28 (28/0/0)       \\
      \quad Average time          & 0.057                    & 0.008                  & 0.015             & 0.019             \\
    \bottomrule
  \end{tabular}
}
\end{table}

Table~\ref{tbl:exp} summarizes the experimental results.
In the first column, ``Ours-XXX'' means our tool with the underlying CHC solver XXX,
``(rec)'' means the translation to recursive functions,
and ``(assert)'' means the translation to assertions.
The row “Benchmark” shows the names of the benchmarks.
The column ``All'' show the summary of the all benchmarks.
\changed{The row “\#Instances” shows the number of instances in the benchmark,
and the first two numbers in the parentheses show the numbers of valid and invalid instances respectively,
and the last number in the parentheses shows the number of solved instances that were not solved by the other tools.}
The row “\#Solved” shows the number of solved instances, and the numbers in the parentheses are the same as ones in ``\#Instances''.
The row ``Average time'' shows the average running time of the solved instances in seconds.

Ours-\eldarica successfully verified all the %
instances except two.
Since one of them needs non-linear properties on integers such as $x \geq y \times z$,
the reduced CHC problem cannot be proved by the underlying CHC solvers used in the experiments.
The other one is proved by Ours-\hoice.
As shown in the rows ``Z3'' and ``CVC4'',
many of the problems were not verified by the SMT solvers regardless of the way of translation.
Especially, they did not verify most of the instances that require inductions over lists.
Moreover, all the invalid instances translated by using assertions were not verified by Z3 nor CVC4,
while those
translated by using recursive functions were verified by Z3 and CVC4.

We explain some benchmark instances below.
The instance ``prop\_77'' in IsaPlanner benchmark is the correctness property of $\fsymb{insert}$ function of insertion sort.
That is, if a list $X$ is sorted and a list $Y$ is $X$ with some integer inserted by $\fsymb{insert}$,
then $Y$ is sorted.
As stated above, we manually converted the recursively defined functions into SARs.
As an example, we now describe how to translate $\fsymb{insert}$ function into a SAR.
The original $\fsymb{insert}$ function is defined as follows (written in OCaml-like language):
\begin{alltt}
let rec insert(x, y) = match y with
  | [] -> x :: []
  | z::xs -> if x <= z then x::y else z::insert(x, xs)
\end{alltt}
We first translate it into the following recursively defined predicate.
\begin{alltt}
let rec insert'(x, ys, rs) = match ys, rs with
  | [],     r::rs'             -> x = r && ys = rs'
  | y::ys', r::rs' when x <= y -> x = r && ys = rs'
  | y::ys', r::rs' when x > y  -> y = r && insert'(x, ys', rs')
  | _                          -> false
\end{alltt}
The predicate \verb|insert'(x,ys,rs)| means that \verb|insert(x,ys)| returns \verb|rs|.
We can now translate it into a SAR.
To express this predicate, we need two states---one for \verb|insert'| and one for equality of lists (\verb|ys = rs'|).
In addition, to check the equality of \verb|ys| and the tail of \verb|rs|,
we need a one-shifted list of \verb|ys| that has a dummy integer $0$ in its head, i.e., $\fsymb{cons}(0,\mathtt{ys})$.
Hence, predicate $\fsymb{insert}(x,X,Y)$ (which means $Y$ is $X$ with $x$ inserted)
can be expressed as $R_{M_{\texttt{ins}}(x)}(\fsymb{cons}(0,X), X, Y)$
where $M_{\texttt{ins}}(x)$ is shown in Fig.~\ref{fig:automaton-insert}.
The transition from $q_0$ to $q_0$ corresponds to the third case of the pattern matching of \verb|insert'|, and
the transition from $q_0$ to $q_1$ corresponds to the first two cases.

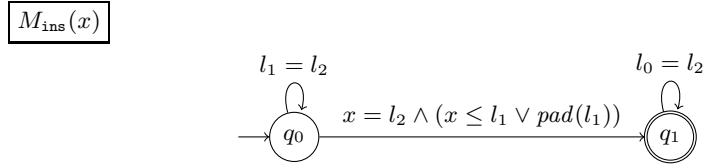
\begin{figure}[t]
\fbox{\( M_{\texttt{ins}}(x) \)}
\par
\centering
\begin{tikzpicture}
\node[state, initial] (q0) {\( q_0 \)};
\node[state, accepting,right of=q0, xshift=3cm] (q1) {\( q_1 \)};
\draw (q0) edge[loop above] node{\( l_1 = l_2 \)} (q0)
       (q0) edge[above] node{\(x = l_2 \land  (x \leq l_1 \lor \ispad(l_1)) \)} (q1)
       (q1) edge[loop above] node{\( l_0 = l_2 \)}  (q1);
\end{tikzpicture}
\caption{The ss-NFA for function $\fsymb{insert}$.}
\label{fig:automaton-insert}
\end{figure}

The instance ``prefix\_trans'' in SAR\_SMT benchmark is the transitivity property of predicate $\pred{prefix}$.
The predicate $\pred{prefix}$ takes two lists, and it holds if the first argument is the prefix of the second argument.
The transitivity of $\pred{prefix}$ is that, if $\pred{prefix}(X,Y)$ and $\pred{prefix}(Y,Z)$, then $\pred{prefix}(X,Z)$ holds.
The instance ``sorted'' in CHC benchmark is the problem explained in Example~\ref{ex:chc-for-sorted}.
%a property on sortedness, that is,
%if a list whose head is $1$ is sorted in ascending order, then the list does not have $0$ as an element.
All the instances explained here were solved by our tool, while neither Z3 nor CVC4 verified them.

\section{Related Work}
\label{sec:related-work}

Although both automatic relations/structures~\cite{BlumensathGradel00,DBLP:conf/lcc/KhoussainovN94,DBLP:conf/lics/Gradel20} and symbolic automata~\cite{VeanesHT10,DBLP:conf/lpar/VeanesBM10,DBLP:conf/cav/DAntoniV17}
and their applications to verification have been well studied,
the combination of them is new to our knowledge, at least in the context of program verification.
D'Antoni and Veanes~\cite{DBLP:journals/fmsd/DAntoniV15} studied the notion of
extended symbolic finite automata (ESFA) which take a single word as an input,
but read multiple consecutive symbols simultaneously. ESFA is related
to our symbolic automatic relations in that
the language accepted by ESFA can be
expressed as \(\set{w\mid \REL(w, \tail{w},\ldots,\tailn{k-1}{w})}\) using a symbolic
automatic relation  \(\REL\).

Haudebourg~\cite[Chapter~6]{HaudebourgThesis} recently applied tree
automatic relations to CHC solving.  Since he uses ordinary
(i.e. non-symbolic) automata, his method can only deal with lists and
trees consisting of elements from a finite set.

As mentioned in Section~\ref{sec:intro}, the current SMT solvers do not work well for
recursive data structures. In the case of lists, one may use decidable theories on arrays or
inductive data
types~\cite{DBLP:conf/vmcai/BradleyMS06,DBLP:journals/jsat/BarrettST07}.
The decidable fragments of those theories are limited.
Our procedure is necessarily incomplete (due to the undecidability of the satisfiability problem), but
can be used as complementary to the procedures implemented in the current SMT solvers,
as confirmed by the experiments. We have focused on lists in this paper, but our approach can be
extended to deal with more general recursive data structures, by replacing automatic relations
with \emph{tree} automatic ones.

There are other approaches to solving CHCs on recursive data structures.
Unno et al.~\cite{UnnoTS17} proposed a method for automatically applying induction
on data structures, and De Angelis et al.~\cite{DBLP:journals/tplp/AngelisFPP18a,DeAngelisFPP20} proposed a method based on fold/unfold transformation.
An advantage of our approach is that we can generate a symbolic automatic relation as a certificate
of the satisfiability of CHCs. To make a proper comparison, however,
we have to devise and implement a missing component -- a procedure for automatically
generating a candidate model (recall that we have given only a procedure for checking the
validity of a candidate model).

\section{Conclusion}
\label{sec:conclusion}
We have introduced the notion of symbolic automatic relations (SARs) and
considered the satisfiability problem for SARs, with applications 
to SMT and CHC solving on recursive data structures in mind.
We have shown that the satisfiability problem 
is undecidable in general, but developed a sound (but incomplete)
procedure to solve the satisfiability problem by a reduction to CHC solving on integers.
We have confirmed the effectiveness of the proposed approach through experiments.
We plan to implement an ICE-based CHC solver based on the proposed approach.
To that end, we need to implement a learner's algorithm to automatically discover appropriate
SARs, following the approach of Haudebourg~\cite{HaudebourgThesis}.

\newpage

\bibliographystyle{splncs04}
\bibliography{main}

\newpage
\appendix
\section{Supplementary Material for Section~\ref{sec:reduction}}

\subsection{Proof of Lemma~\ref{lem:cons-free}}
\label{appx:cons-free}
Here we give the proof of Lemma~\ref{lem:cons-free}, which has been omitted from the body of the paper.
For the idea behind the proof, please refer to Section~\ref{sec:translation}.
\begin{lemma}[Identical to Lemma~\ref{lem:cons-free}]
\label{lem:cons-free-restate}
  Let \( R_{M(\seq{x})} \) be a \( (k, n) \)-ary SAR, \( \seq{T} \defeq T_1, \ldots, T_k \) be list terms and \( \seq{t} \defeq t_1, \ldots, t_n \) be integer terms.
  Then we can effectively construct a \( (k, n + m) \)-ary ss-NFA \( M'(\seq{x}, \seq{x}') \), list terms \( \seq{T'} \defeq T'_1, \ldots, T'_k \) and integer terms \( \seq{t'} \defeq t'_1, \ldots, t'_{m + n} \) such that (1) \(R_{M(\seq{x})}(\seq{T}, \seq{t})\) is satisfiable in \( \model \) iff \( R_{M'(\seq{x}, \seq{x}')}(\seq{T'}, \seq{t'}) \) is satisfiable in \( \model \) and (2) \( \{ T'_1, \ldots, T'_k \} \) is cons-free.
\end{lemma}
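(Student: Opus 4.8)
The plan is to prove the lemma by induction on the total number of occurrences of \( \fsymb{cons} \) in the list terms \( \seq{T} \), removing one \( \fsymb{cons} \) per step while preserving satisfiability in \( \model \) and adding a single parameter each time (so that \(m\) equals the number of cons removed, matching the \((n+m)\)-ary conclusion). First I would normalise each \( T_i \) using the \( \model \)-valid identities \( \tail{\fsymb{cons}(a,T)}=T \) and \( \tail{\nil}=\nil \), so that every \( T_i \) has the shape \( \fsymb{cons}(s_1,\fsymb{cons}(s_2,\ldots,\fsymb{cons}(s_p,B))) \) with \(B\) cons-free (that is, \( \nil \) or \( \tailn{m}{X} \)). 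The base case, where every \( T_i \) has \( p=0 \), is immediate, so it suffices to describe a single removal that strictly decreases the cons-count, generalising the transformation shown in the main text for \( R_{M_2(x)}(\fsymb{cons}(1,\fsymb{cons}(t,Y)),\fsymb{cons}(0,Y),X,x) \).

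For the removal I would choose a stream \( i_0 \) whose term still contains a \( \fsymb{cons} \) and take its innermost one, say \( \fsymb{cons}(s,X) \), occurring at depth \(d\), so that its tail is the bare variable \(X\). The three ingredients are: (i) add one fresh parameter \( x' \) and pass the integer term \(s\) as its actual argument, so that \(x'\) is bound to \( \sem{s}_{\model,\alpha} \); (ii) reinterpret \(X\) as \( \fsymb{cons}(s,X) \), i.e.\ replace the absorbed occurrence by the new \(X\) and replace every other occurrence \( \tailn{m}{X} \) by \( \tailn{m+1}{X} \), which realigns the remaining streams exactly as \( \fsymb{cons}(0,Y) \) became \( \fsymb{cons}(0,\tail{Y}) \) in the example; and (iii) build \( M'(\seq{x},x') \) by unrolling \(M\) up to depth \(d\), duplicating each state once per column \(1,\ldots,d\) so that the current column is recorded in the control state, keeping the original guards on every transition except the one that reads column \(d\), whose guard is obtained from the original by substituting \(x'\) for the input letter \( l_{i_0} \). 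Note that after each absorption the freshly introduced variable is again bare, which keeps us in this same situation for the next step.

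The correctness of one step rests on the assignment correspondence sending \( \alpha \) to \( \alpha[X \mapsto \sem{s}_{\model,\alpha}\cdot\alpha(X)] \) between models of the old and the new formula. Under this correspondence the convolutions read by \(M\) and \(M'\) coincide, so the forward direction is routine. The essential point is the backward direction: in \(M'\) the letter actually read from stream \(i_0\) at column \(d\) is the first element of the freely quantified new \(X\) and is therefore unconstrained, but because the guard at column \(d\) mentions \(x'\) (bound to \( \sem{s} \)) instead of \( l_{i_0} \), acceptance of \(M'\) does not depend on that garbage letter; hence from an accepting run of \(M'\) one recovers an accepting run of \(M\) by discarding the first element of \(X\) and reading \(s\) in its place. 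This is precisely why the substitution, rather than merely retaining the original guard, is needed, and why \(M\) must be unrolled so that the substitution applies at exactly the column of the removed \( \fsymb{cons} \).

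The hard part will be the bookkeeping for shared variables and the two corner cases absent from the example. When the innermost cons has tail \( \tailn{m}{X} \) with \( m>0 \), inserting \(s\) at the interior position \(m+1\) would disturb the shallower occurrences \( \tailn{m'}{X} \) with \( m'<m \); I would handle this by appealing to gap-freeness, which guarantees that the intermediate tails \( \tailn{0}{X},\ldots,\tailn{m}{X} \) are all available so that consecutive tails can be related, or by first reducing to the bare-variable case. When \( B=\nil \), there is no variable to absorb into, so I would introduce a fresh variable, make it the argument, and have the unrolled automaton pin its length by forcing \( \ispad \) on stream \(i_0\) from column \(d{+}1\) onward, so that the fresh list is constrained to the intended finite list. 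Checking that all of these adjustments simultaneously preserve the relation across every stream mentioning \(X\) is the delicate point; the surrounding induction is otherwise routine.
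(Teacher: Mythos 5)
Your overall strategy is the same as the paper's: normalise each \( T_i \) into a spine of conses over a cons-free base, unroll the automaton so that the first few columns are recorded in the control state, and remove one innermost cons per step by adding a fresh parameter, substituting it for the input letter \( l_{i} \) in the guards at the corresponding column, and absorbing the removed value into the list variable while shifting its other occurrences by one \( \fsymb{tail} \). Your bare-variable case and your \( \nil \) case (fresh variable whose surplus letters are killed by forcing \( \ispad \) from column \( d+1 \) onwards) are exactly the paper's two cases. The genuine gap is the case you yourself flag, an innermost cons of the form \( \fsymb{cons}(s, \tailn{r}{X}) \) with \( r > 0 \): you leave it unresolved, and neither of your suggested fixes works. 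Gap-freeness is not a hypothesis of the lemma (it is a condition on \emph{cons-free} term sets, i.e.\ something to be established \emph{after} this lemma), and if your intent is to let the automaton itself ``relate consecutive tails'', that is impossible: a symbolic synchronous automaton over \( \setZ \) cannot check that one input stream is the shift of another, since its guards only see the current column and its finite state cannot carry an arbitrary integer to the next step --- this is precisely why the paper applies tails at the term level, as in \( \pred{sorted}(X) \defeq R_{M_1}(X, \tail{X}) \). ``First reducing to the bare-variable case'' also does not go through: you cannot replace \( \tailn{r}{X} \) by a fresh variable when \( X \) occurs in other arguments, because that destroys the coupling between the occurrences.

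In fact the case is no harder than the bare one; your worry about an ``interior insertion'' comes from choosing the wrong correspondence. The paper keeps the absorbed occurrence \( \tailn{r}{X} \) \emph{unchanged} (same number of tails, now on the new \( X \)) and substitutes \( \tail{X} \) for \( X \) in \emph{every other} occurrence, including inside the integer terms \( s_j \) and the actual argument passed for the new parameter. Under the head-prepending correspondence \( X_{\mathrm{new}} \mapsto a \cdot X_{\mathrm{old}} \) (not an interior insertion into \( X_{\mathrm{old}} \)), every shifted occurrence denotes exactly what it denoted before, while the absorbed occurrence \( \tailn{r}{X_{\mathrm{new}}} = \tailn{r-1}{X_{\mathrm{old}}} \) acquires exactly one spurious letter (the \( r \)-th letter of \( X_{\mathrm{old}} \)), and that letter lands precisely at the column whose guard was rewritten to use the parameter, so it is ignored; the remainder of the stream is \( \tailn{r}{X_{\mathrm{old}}} \), as required. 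This uniform prepend-and-shift treatment disposes of all \( r \ge 0 \) at once; without it, your induction cannot make progress even on inputs as simple as \( R_{M}(\fsymb{cons}(5, \tail{X}), X) \), whose only cons has a non-bare tail.
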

\begin{proof}
First, note that if a term \( T \) is of the form \( \tail{\ins{t}{T'}} \) then the semantics of \( T\) is equal to that of \( T' \).
Thus, we may assume that each \( T_i \) is of the form \( \fsymb{cons}({t_{i 0}, \fsymb{cons}(t_{i 1}, \fsymb{cons}( \cdots \fsymb{cons}(t_{i l_i}, T''_i) \cdots )))}\), where \( T''_i \) is either \( \nil \) or \( \tailn{m}{X} \).
We fix a natural number \( l \defeq \max\{l_0, \ldots, l_{k - 1}\} \).
To simplify the notation, we write \( \ins{\seq{t}}{T} \) for the term \( \fsymb{cons}({t_{0}, \fsymb{cons}(t_{1}, \fsymb{cons}( \cdots \fsymb{cons}(t_{m}, T) \cdots )))}\) provided that \( \seq{t} = t_0, t_1, \ldots, t_{m}\).

\newcommand{\Mexpand}{M^{\mathrm{exp}}}
\newcommand{\Qexp}{Q^{\mathrm{exp}}}
\newcommand{\Iexp}{I^{\mathrm{exp}}}
\newcommand{\Fexp}{F^{\mathrm{exp}}}
\newcommand{\Deltaexp}{\Delta^{\mathrm{exp}}}

Before we construct the automaton \( M'(\seq{x}, \seq{x}')\) we first construct a ss-NFA \( \Mexpand(\seq{x}) \) that satisfies (i) given a word \( w \) such that \(|w| \ge l \), \( w \in \lang(\Mexpand(\seq{j})) \) iff \(w \in \lang(M(\seq{j})) \) and (ii) ``states that are reached in the first \( l \) steps are always distinct''.
Suppose that \( M(\seq{x}) = (Q, I, F, \Delta)\).
Such an automaton \( \Mexpand(\seq{x} ) \) can be defined as \( (\Qexp, \Iexp, \Fexp,\Delta^{\mathrm{exp}}) \) where
\begin{itemize}
    \item \( \Qexp \defeq Q \times \{0, \ldots, l\} \),
    \item \( \Iexp \defeq \{(q, 0) \mid q \in I \}\),
    \item \( \Fexp \defeq  \{ (q, l) \mid q \in F \} \) and
    \item \( \Deltaexp \defeq \{ ((p, i), \varphi, (q, i + 1)) \!\mid\! (p, \varphi ,q) \in \!\Delta, i \in \!\{ 0, \ldots, l - 1\}  \} \cup \{ ((p, l), \varphi, (q, l)) \!\mid (p, \varphi, q) \in \Delta \}\).
\end{itemize}
It is easy to check that this automaton \( \Mexpand(\seq{x}) \) satisfies the conditions (i) and (ii) described above.

Now we are ready to construct the ss-NFA \( M'(\seq{x}, \seq{x}') \).
We construct this automaton in a step-by-step manner and only show how each step work.

\newcommand{\Mstep}{M^{(i, m)}}
\newcommand{\Deltastep}{\Delta^{(i, m)}}
Suppose that \( T_i = \ins{\seq{s}}{\ins{s_m}{T''_i}} \), where \( \seq{s} = s_1, \ldots, s_{m - 1}\).
We describe how to remove the innermost \( \fsymb{cons}\) by constructing a new \( (k, n + 1)\)-ary ss-NFA \( \Mstep(\seq{x}, y) \) from \( \Mexpand(\seq{x})\) and integer terms \( \seq{t}'\) and list terms \( \seq{T}' \) such that
\begin{gather}
\label{subgoal-for-cons-free}
    \text{\( R_{M(\seq{x})}(\seq{T}, \seq{t})\) is satisfiable in \( \model \) iff \( R_{\Mstep(\seq{x}, y)}(\seq{T}', \seq{t}', s'_m) \) is satisfiable in \( \model \)}
\end{gather}
The set of states, initial states and final states of \( \Mstep(\seq{x}, y) \) are the same as that of \( \Mexpand(\seq{x}) \).
That is we define \( \Mstep(\seq{x}, y) \) as \( (\Qexp, \Iexp, \Fexp, \Deltastep) \) with an appropriate transition relation \( \Deltastep \) (as described below).
By repeating this process, we can ensure the cons-freeness.

We proceed by a case analysis on the shape of \( T_i'' \).
\begin{description}
    \item[(Case where \( T''_i = \nil \))]
      In this case, we set \( T'_i \defeq \ins{\seq{s}}{Y}\), where \( Y \) is a fresh variable, and \( T'_j \defeq T_j \) if \( j \neq i\).
      We also set \( \seq{t}' \defeq \seq{t}\) and \( s'_m \defeq s_m \).
      We define \( \Deltastep \) as
      \begin{align*}
        &\{ ((q, m - 1), \phi[y / l_i], (q', m) ) \mid ((q, m - 1), \phi, (q', m))) \in \Deltaexp \} \\
        &\cup \{ (  (q, i), \phi, (q', i')) \mid ((q, i), \phi, (q', i') \in \Deltaexp \text{ and } i < m - 1 \} \\
        &\cup \{ ( (q, i), \phi \wedge \ispad(l_i),  (q', i')) \mid ((q, i), \phi , (q', i') \in \Deltaexp \text{ and } m -1 < i  \}
      \end{align*}

       Now we check that the property \eqref{subgoal-for-cons-free} indeed holds.
       Suppose that \( \model, \alpha \models R_{M(\seq{x})}(\seq{T}, \seq{t}) \) and \( \sem{s_m}_{\model, \alpha} = j \).
       Then it is easy to check that \( \model, \alpha[Y \mapsto j ] \models R_{\Mstep(\seq{x}, y)}(\seq{T}', \seq{t}', s_m) \) because \( \sem{T'_i}_{\model, \alpha[Y \mapsto j ]} = \sem{T_i}_{\model, \alpha}\).
       Now suppose that \( \model, \alpha \models R_{\Mstep(\seq{x}, y)} (\seq{T}', \seq{t}', s'_m) \).
       Then \( \sem{Y}_{\model, \alpha} = j \) for some integer \( j \) (or \( \sem{Y} = \emptyword \)), otherwise we cannot obtain an accepting run of the automaton because of the condition \( \ispad(l_i) \) we added.
       So, \( \sem{T'_i}_{\model, \alpha} \) and \( \sem{T_i}_{\model, \alpha} \) must coincide except for the \( m \)-th element, which is \( \sem{s_m}_{\model, \alpha} \) for the list \( \sem{T_i}_{\model, \alpha} \) and \( j \) (or \( \pad \) if \( \sem{Y} = \emptyword \)) for \( \sem{T'_i}_{\model, \alpha} \).
       Since at the \( m \)-th step, the automaton \( \Mstep(\seq{x}, y)\) ignores the input from \( \sem{T'}_{\model, \alpha}\), and uses the integer \( \sem{s_m}_{\model, \alpha} \) instead, it follows that \( \model, \alpha \models R_{M(\seq{x})}(\seq{T}, \seq{t}) \).

    \item[(Case where \( T''_i = \tailn{r}{X} \))]
    For this case, we define \( \Deltastep \) as
    \begin{align*}
        &\{ ((q, m - 1), \phi[y / l_i], (q', m) ) \mid ((q, m - 1), \phi, (q', m))) \in \Deltaexp \} \\
        &\cup \{ ( (q, i), \phi, (q', i')) \mid ((q, i), \phi, (q', i') \in \Deltaexp \text{ and } i \neq m \}.
    \end{align*}
     We set \( T'_i \defeq \ins{\seq{s'}}{T''_i}\), where \( \seq{s'} = \seq{s}[\tail{X}/X]\), and \( T'_j \defeq T_j[\tail{X}/X] \) if \( j \neq i\).
     Similarly, we set \( t'_i \defeq t_i[\tail{X} / X]\) for each \( i \) and \( s_m' \defeq s_m [\tail{X} / X]\).

     We conclude the proof by checking the property \eqref{subgoal-for-cons-free}.
     Suppose that \(\model, \alpha \models R_{M(\seq{x})}(\seq{T}, \seq{t}) \) and \( \alpha(X) = w \).
     Then it is easy to check that \( \model , \alpha[X \mapsto 0w] \models R_{\Mstep(\seq{x}, y)} (\seq{T}', \seq{t}', s'_m) \).
     First, observe that
     \begin{align*}
      \sem{T'_i}_{\model, \alpha[X \mapsto 0w ]} 
      &= \sem{s'_1}_{\model, \alpha[X \mapsto 0w]} \cdots \sem{s'_{m - 1}}_{\model, \alpha[X \mapsto 0w]} \sem{T''_i}_{\model, \alpha[X \mapsto 0 w]} \\
      &= \sem{s_1}_{\model, \alpha} \cdots \sem{s_{m - 1}}_{\model, \alpha} \sem{T''_i}_{\model, \alpha[X \mapsto 0 w]} \\
      &= \sem{s_1}_{\model, \alpha} \cdots \sem{s_{m - 1}}_{\model, \alpha} \, a\,  \sem{T''_i}_{\model, \alpha}
     \end{align*}
     for some \( a \in \setZ \), which is the word obtained by replacing the \( m \)-th element of \( \sem{T_i}_{\model, \alpha} \) with \( a \).
     Since (1) \( a \) is ignored by the automaton \( \Mstep(\seq{x}, y)\), and \( \sem{s'_m}_{\model, \alpha[X \mapsto 0w]} = \sem{s_m}_{\model, \alpha} \) is used instead, (2) \( \sem{t_i}_{\model, \alpha} = \sem{t'_i}_{\model, \alpha[X \mapsto 0w]} \), and (3) \( \sem{T_j}_{\model, \alpha} = \sem{T'_j}_{\model, \alpha[X \mapsto 0w]} \) for \( j \neq i\), it follows that \( \model , \alpha[X \mapsto 0w] \models R_{\Mstep(\seq{x}, y)} (\seq{T}', \seq{t}', s'_m) \).

     Now assume that \( \model , \alpha \models R_{\Mstep(\seq{x}, y)} (\seq{T}', \seq{t}', s'_m) \).
     If \( \alpha(X) = i w \) for some \( i \in \setZ \) and \( w \in \setZ^* \), then we can show that \(\model, \alpha[X \mapsto w] \models R_{M(\seq{x})}(\seq{T}, \seq{t}) \).
     Otherwise, \( \alpha(X) = \emptyword \).
     Since we defined \( \pred{tail}^{\model}(\emptyword) = \emptyword \), we have \( \sem{T'_j}_{\model, \alpha} = \sem{T_j}_{\model, \alpha}\) for \( j \neq i \) and the same hold for \( \seq{t}' \), \( \seq{s}' \) and \( s'_m \).
     It follows that \(\model, \alpha \models R_{M(\seq{x})}(\seq{T}, \seq{t}) \).

\end{description}
\qed
\end{proof}

\subsection{Detailed proof of Theorem~\ref{thm:correctness}}
\label{appx:correctness}
We give a more detailed proof of Theorem~\ref{thm:correctness}.
In particular, we explain how to construct an assignment from a given derivation.

The following lemma is a  basic property of derivations that we will use to prove the correctness of the reduction.
We omit the proof for this lemma as it is a standard property of SLD-resolution.
\begin{lemma}
\label{lem:derivation-properties}
Let \( \Pi \) be a system of linear CHCs.
Suppose that \( \dstate{A_0}{\cnst_0} \derive{(C_1, \theta_1)} \dstate{A_1}{\cnst_1} \derive{(C_2, \theta_2)} \cdots \derive{(C_n, \theta_n) } \dstate{\top}{\cnst_n} \) is a derivation with respect to \( \Pi \) and let \( \alpha \) be an assignment such that \( \Zp, \alpha \models \cnst_n \).
  If \( \cnst_i \) is of the form \( \cnst_{i1} \wedge \cnst_{i2}\) (up to commutativity and associativity of conjunction) then \( \Zp, \alpha \models \theta_{i + 1}\theta_{i + 2} \cdots \theta_{n} \cnst_{i1}\).
  \label{it:assignment-sat-constraint}
\qed
\end{lemma}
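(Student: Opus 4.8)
The plan is to prove a slightly stronger, uniform statement by downward induction on the index $i$, and then read off the lemma by distributing a substitution over a conjunction. Concretely, I would show that
\[
\Zp,\alpha \models \theta_{i+1}\theta_{i+2}\cdots\theta_n\,\cnst_i
\quad\text{for every } 0 \le i \le n,
\]
where the composite $\theta_{i+1}\cdots\theta_n$ is read as the standard left-to-right composition of substitutions (so that $\theta_{i+1}$ is applied first). The single structural fact driving the induction is the shape of a derivation step: by the definition of $\derive{}$, the step $\dstate{A_i}{\cnst_i}\derive{(C_{i+1},\theta_{i+1})}\dstate{A_{i+1}}{\cnst_{i+1}}$ forces $\cnst_{i+1}=\theta_{i+1}(\cnst_i\wedge\cnst'_{i+1})$, where $\cnst'_{i+1}$ is the body constraint of $C_{i+1}$. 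Thus each step merely conjoins a fresh body constraint and applies one more unifier, so the final constraint $\cnst_n$ accumulates all earlier constraints in substituted form.

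For the base case $i=n$ the substitution list is empty, so the claim is exactly $\Zp,\alpha\models\cnst_n$, which holds by hypothesis (and $\cnst_n$ is satisfiable because $\dstate{\top}{\cnst_n}$ is the final state of a derivation). For the inductive step, assume $\Zp,\alpha\models\theta_{i+2}\cdots\theta_n\,\cnst_{i+1}$. Substituting $\cnst_{i+1}=\theta_{i+1}(\cnst_i\wedge\cnst'_{i+1})$ and using associativity of composition together with the purely syntactic fact that substitution distributes over conjunction, I obtain
\[
\theta_{i+2}\cdots\theta_n\,\cnst_{i+1}
= \theta_{i+1}\theta_{i+2}\cdots\theta_n\,\cnst_i
\;\wedge\;
\theta_{i+1}\theta_{i+2}\cdots\theta_n\,\cnst'_{i+1}.
\]
Since any assignment satisfying a conjunction satisfies each conjunct, $\Zp,\alpha\models\theta_{i+1}\cdots\theta_n\,\cnst_i$, which closes the induction. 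The lemma then follows immediately: writing $\cnst_i=\cnst_{i1}\wedge\cnst_{i2}$ and distributing the composite substitution once more gives $\theta_{i+1}\cdots\theta_n\,\cnst_i=(\theta_{i+1}\cdots\theta_n\,\cnst_{i1})\wedge(\theta_{i+1}\cdots\theta_n\,\cnst_{i2})$, so $\alpha$ satisfies the first conjunct, as required.

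The difficulties here are bookkeeping rather than conceptual, so I expect no real obstacle — which is precisely why the statement can be asserted as a standard property of SLD-resolution. The one point requiring care is the composition-order convention: I would fix up front that $\theta_{i+1}\cdots\theta_n$ applies $\theta_{i+1}$ first, so that the outer substitutions $\theta_{i+2},\ldots,\theta_n$ of the induction hypothesis combine with the newly exposed $\theta_{i+1}$ to yield exactly $\theta_{i+1}\cdots\theta_n$ in the telescoping step. A second, pleasant simplification worth flagging is that the whole argument keeps the assignment $\alpha$ fixed and only rewrites the \emph{syntax} of the constraint it satisfies; hence it needs nothing more than (i) distribution of a substitution over $\wedge$, (ii) associativity of substitution composition, and (iii) the fact that a model of a conjunction is a model of each conjunct — no change-of-assignment substitution lemma is invoked. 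Finally, the ``up to commutativity and associativity of conjunction'' proviso is absorbed automatically, since the reasoning only ever relies on $\cnst_{i1}$ being a conjunct of $\cnst_i$ and on distribution preserving the conjunctive structure.
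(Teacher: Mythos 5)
Your proof is correct. The paper in fact gives no proof of this lemma at all (it is explicitly omitted as ``a standard property of SLD-resolution''), and your downward induction on \( i \) --- driven by the single structural fact \( \cnst_{i+1} = \theta_{i+1}(\cnst_i \wedge \cnst'_{i+1}) \) from the definition of a derivation step, together with distribution of substitution over conjunction and the left-to-right composition convention --- is exactly the standard argument the paper alludes to, so there is no divergence to report.
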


\begin{proof}[Detailed proof of~Theorem~\ref{thm:correctness}]
\newcommand{\formula}{R_{M(\seq{x})}(\seq{T}, \seq{y})}
\newcommand{\toterm}[1]{\underline{#1}}
Suppose that \( M(\seq{x}) = (Q, I, F, \Delta)\), \( \seq{T} = T_0, \ldots, T_{k - 1} \), \( \seq{y} = y_0, \ldots, y_{n - 1}\) and let \( \Pi \defeq \toCHC{\formula} \).
As explained, it suffices to show that \( \formula \) is satisfiable if and only if there is a derivation starting from \( \dstate{\sttoprd{q}(\seq{u}, \seq{x})}{\ed} \) with respect to \( \Pi \) for some \( q \in F \).
We omit the only if direction.
This is because we think the proof of the only if case is not that difficult and the idea of the proof has already been given. %

(If)
By assumption, there is a derivation
\begin{gather*}
\dstate{\sttoprd{q_m}(\seq{u}_m, \seq{x}_m)}{\cnst_m}
\derive{(C_m, \theta_m)} \cdots
\derive{(C_1, \theta_1)}
\dstate{\sttoprd{q_0}(\seq{u}_0, \seq{x}_m)}{\cnst_0} \derive{(C_0, \theta_0) } \dstate{\top}{\cnst}
\end{gather*}
where \( \cnst_m = \ed \).
Here we assume that variables \( \seq{u}_m = u_{i0}, \ldots, u_{i(k - 1)}\) and \( \seq{x}_i = x_{i0}, \ldots, x_{i (n - 1)}\) are appropriately renamed to avoid name clash.
(We may simply write \( \seq{u} \) or \( u_{j} \) to mean \( \seq{u}_i \) or \( u_{ij}\) for some \( i \) that is clear from the context.)
Because a clause in \( \Pi \) that contains predicates in both the body and the head is a clause that corresponds to an element of the transition relation, for every \( C_i \) \( (1 \le i \le m) \), we have \( C_i = \toCHC{q \trans{\phi} p}\) for some transition \( (q \trans{\phi} p) \in \Delta \).

We construct an accepting run of \( M(\seq{x}) \) using an assignment in \( \Zp \) and the unifiers.
Take an assignment \( \alpha \) such that \(\Zp, \alpha \models \cnst \), which exists because \( \cnst \) is satisfiable in \( \Zp \).
Let \( \theta_{\le i} \defeq \theta_0 \circ \theta_1 \circ \cdots \circ \theta_i \).
We define \( a_{ij} \in \Zp \), where \( 0 \le i \le m \) and \(0 \le j \le k - 1 \), by \( a_{ij} \defeq \sem{\theta_{\le i}(u_{ij})}_{\Zp, \alpha} \) and set \( a_i \defeq (a_{i0}, \ldots, a_{i k -1})\).
Note that \( a_m = (\pad, \ldots, \pad) \) because \(\model, \alpha \models \theta_{\le m} \ed \) must hold by Lemma~\ref{lem:derivation-properties}.
We also define \( j_i \defeq \sem{\theta_0(x_i)}_{\model, \alpha}\) and write \( \seq{j} \) for \( j_0, \ldots, j_n\).
Our goal is to show that \( a_0 a_1 \cdots a_{m - 1}\) is an accepting run of \( M(\seq{j}) \) such that
\begin{gather*}
  \text{\( q_0 \trans{a_0} q_1 \trans{a_1} \cdots \trans{a_{m - 1}} q_m \) and \( q_{i - 1} \trans{a_{i-1}} q_{i} \) corresponds to \( C_i \).}
\end{gather*}
The fact that \( q_0 \) is initial and \( q_m \) is final is obvious because only predicate symbols corresponding to initial (resp.~final) states can appear in the initial (resp.~final) derivation state.
Thus, it suffices to check that \( q_i \trans{a_i} q_{i + 1} \) is a valid transition for every \( i \in \{ 0, \ldots, m - 1\} \).
First observe that \( \sem{\theta_{\le i}(x_l)} = j_l\) for any \( i \in \{0, \ldots, m \} \). %
Recall that, for each \( i \), we have \( \dstate{\sttoprd{q_i}(\seq{u}, \seq{x})}{\cnst_i} \derive{(\toCHC{q_{i - 1} \trans{\phi} q_i}, \theta_i)} \dstate{\sttoprd{q_{i-1}}(\seq{u},\seq{x})}{\cnst_{i - 1}}\) for some \( \phi \).
Thus, \( \cnst_{i - 1} \) is of the form \( \theta_i(\phi [u_0/l_0, \ldots, u_{k - 1} / l_{k - 1}]) \wedge \cnst' \) for some \( \cnst' \).
By Lemma~\ref{lem:derivation-properties}, \( \Zp, \alpha \models \theta_{\le i} (\phi [u_0/l_0, \ldots, u_{k - 1} / l_{k - 1}])\) must hold.
Since the free variables of \( \phi \) are among \( \seq{l}, \seq{x} \), we have \( \Zp, [\seq{l} \mapsto a_i, \seq{x} \mapsto \seq{j}] \models \phi\) as desired.

Next we show that the accepting run \( a_0 a_1 \cdots a_{m - 1}\) can be given as a convolution of words \( \conv(w_0, \ldots, w_{k - 1}) \); note that such \( w_0, \ldots, w_{k-1} \in \setZ^* \) are unique if they exist.
Let \( w'_i \defeq  a_{0i} a_{1i} \cdots a_{m - 1 i} \).
It suffices to show that (i) for each \( i \), \( w'_i = w_i w''_i \) for some \( w_i \in \setZ^* \) and \( w''_i \in \{ \pad \}^* \) and (ii) \( w_i' \in \setZ^* \) for some \( i \).
Then we have \( \conv(w_0, \ldots, w_{l - 1}) = a_0a_1 \cdots a_{m - 1} \).
To prove (i), first note that, for each \( i \in \{0, \ldots, m - 1 \}\),  \( \cnst_i \) must be of the form \( \theta_i \padding   \wedge \cnst'\) because \( \padding \) appears in the body of \( C_{i + 1} \).
Since \( \model, \alpha \models \theta_{\le i} \padding \) holds Lemma~\ref{lem:derivation-properties}, it follows that ``if \( a_{ij} = \pad \) then \( a_{{i + 1} j} = \pad \)''.
Hence, (i) holds.
To prove (ii) first observe that \( \cnst_{m - 1} \) must be of the form \(  \lnot \ed \wedge \cnst' \) with the free variables of \( \lnot \ed \) being \( \seq{u}_{m - 1}\).
Thus, by Lemma~\ref{lem:derivation-properties}, we have \( \model, \alpha \models \theta_{\le {m -1}} (\lnot \ed) \).
This implies that there exists  \( i \) such that \( a_{{m - 1} i} \in \setZ \).
Therefore, from (i), we have \( w'_i \in \setZ^* \).

It remains to show that there is an assignment \( \beta \) in \( \model \) such that \( \sem{T_i}_{\model, \beta} = w_i \) for every \( i \in \{0, \ldots, k - 1\} \).
Since \( \{ T_0, \ldots, T_{k - 1}\}\) is gap-free, each list variable \( X \) that is free in \( \seq{T} \) must be equal to some \( T_i \).
We define \( \beta \) by \( \beta(X) \defeq w_i \) provided that \( X = T_i \).
What we need to verify is  (i) \( w_i = \emptyword \) if \( T_i = \nil \) and (ii) \( w_i = \sem{\tailn{m}{X}}_{\model, \beta }\) if \( T_i = \tailn{m}{X} \).
It is easy to check that (i) holds because, by the constraint \( \cnstnil \), \( a_{0i} \) must be \( \pad \) if \( T_i = \nil \).
The statement (ii) is proved by induction on \( m \).
The case where \( m = 0 \) is trivial.
Suppose that \( 0 < m \), \( T_{j'} = \tailn{m - 1}{X} \) and \(T_j = \tailn{m}{X} \).
By the induction hypothesis, we have \( \sem{T_{j'}}_{\model, \beta} = w_{j'} \).
If \( w_{j'} = \emptyword \) then we can easily check that \( w_j = \emptyword \) using the constraint \( \tailnil \).
So, let us consider the case where \( w_j \neq \emptyword \).
Observe that, for each \( i \in \{0, \ldots, m - 1 \}\),  \( \cnst_i \) must be of the form \(\theta_i \shift \wedge \cnst'\) because \( \shift \) appears in the body of \( C_{i + 1} \).
By Lemma~\ref{lem:derivation-properties}, we have \(\model, \alpha \models  \theta_{\le i} \shift \).
We thus have \( a_{ij'} = a_{(i - 1)j} \) for all \( i \in \{1, \ldots, m \} \).
(Note that in particular we have \( a_{(m - 1) j} = \pad \) because \(a_{m j} = \pad \)).
Therefore, \( w_j \) is the tail of \( w_j' \) as desired.

\qed
\end{proof}
\clearpage
\section{Details of the Experiments}
\label{sec:exp-detail}

{
\newcommand\cons{\mathit{cons}}
\newcommand\cnt{\mathit{count}}
\newcommand\mem{\mathit{count}}
\newcommand\length{\mathit{length}}
\newcommand\take{\mathit{take}}
\newcommand\sm{\mathit{sum}}
\newcommand\prefix{\mathit{prefix}}
\newcommand\insone{\mathit{ins1}}
\newcommand\inse{\mathit{ins}}
\newcommand\insort{\mathit{insort}}
\newcommand\butlast{\mathit{butlast}}
\newcommand\sorted{\mathit{sorted}}
\newcommand\last{\mathit{last}}
\newcommand\nth{\mathit{nth}}
\begin{table}[p]
  \caption{Lists of the instances in the benchmark sets (1/2)}
  \label{tbl:benchmark1}
  \centering
  \begin{tabular}{@{}lllr@{}}
    \toprule
    Benchmark & Instance & Property & Validity \\ \midrule
    IsaPlanner & prop\_04 & $1 + \cnt(n,X) = \cnt(n,cons(n,X))$ & \checkmark \\
    IsaPlanner & prop\_05 & $n = x \Rightarrow 1 + \cnt(n, X) = \cnt(n, \inse(x,X))$ & \checkmark \\
    IsaPlanner & prop\_15 & $\length (\inse( x, X)) = 1 + \length( X)$ & \checkmark \\
    IsaPlanner & prop\_16 & $X = \nil \Rightarrow \last (\inse(x,X)) = x$ & \checkmark \\
    IsaPlanner & prop\_29 & $\mem( x, (\insone(x, X)))$ & \checkmark \\
    IsaPlanner & prop\_30 & $\mem( x, (\inse( x, X)))$ & \checkmark \\
    IsaPlanner & prop\_39 & $\cnt(n, X) + \cnt(n, \cons(m,\nil)) = \cnt(n, cons(m,  X))$ & \checkmark \\
    IsaPlanner & prop\_40 & $\take(0, X) = \nil$ & \checkmark \\
    IsaPlanner & prop\_42 & $n \geq 0 \land Y = \take(n, X) \land zs = \take(n+1, \inse(x,X)) $ & \checkmark \\
               &          & $\qquad \Rightarrow zs = \inse(x,Y)$ &  \\
    IsaPlanner & prop\_50 & $Y = \butlast(X) \land n = \length( X) \Rightarrow Y = \take(n-1, X)$ & \checkmark \\
    IsaPlanner & prop\_62 & $X \neq \nil \Rightarrow \last (cons(x,X)) = \last(X)$ & \checkmark \\
    IsaPlanner & prop\_67 & $\length (\butlast(X)) = \length( X) - 1$ & \checkmark \\
    IsaPlanner & prop\_71 & $x \neq y \Rightarrow \mem( x, (\inse( y, X))) = \mem( x, X)$ & \checkmark \\
    IsaPlanner & prop\_77 & $\sorted( X) \land Y = \insort(x, X) \Rightarrow \sorted( Y)$ & \checkmark \\
    IsaPlanner & prop\_86 & $x < y \Rightarrow \mem( x, (\inse( y, X))) = \mem( x, X)$ & \checkmark \\
    SAR\_SMT & count\_cons & $\cnt(x, X) \leq \cnt(x, \inse(y,X))$ & \checkmark \\
    SAR\_SMT & count\_cons\_eq & $x = y \land n = \cnt(x, X) \Rightarrow n + 1 = \cnt(x, \inse(y,X))$ & \checkmark \\
    SAR\_SMT & count\_cons\_invalid & $\cnt(x, X) = \cnt(x, \inse( y, X))$ &  \\
    SAR\_SMT & count\_cons\_neq1 & $x \neq y \Rightarrow \cnt(x, X) = \cnt(x, \inse(y,X))$ & \checkmark \\
    SAR\_SMT & count\_cons\_neq2 & $x \neq y \Rightarrow \cnt(x, X) = \cnt(x, \inse(y,X))$ & \checkmark \\
    SAR\_SMT & count\_nil1 & $\cnt(x, \nil) = n \Rightarrow n = 0$ & \checkmark \\
    SAR\_SMT & count\_nil2 & $\cnt(x, \nil) = 0$ & \checkmark \\
    SAR\_SMT & ins\_head1 & $x < y \Rightarrow \inse( x, cons(y,X)) = \inse(x,cons(y,X))$ & \checkmark \\
    SAR\_SMT & ins\_head2 & $x \geq y \land \inse( x, X) = Y$ & \checkmark \\
             &            & $\qquad \Rightarrow \inse( x, cons(y,X)) = \cons(y , \cons(x,Y))$ & \\
    SAR\_SMT & ins\_insort & $\sorted( X) \Rightarrow \inse( x, X) = \insort(x, X)$ & \checkmark \\
    SAR\_SMT & ins\_insort\_invalid & $\inse( x, X) = \insort(x, X)$ &  \\
    SAR\_SMT & ins\_nil & $X = \nil \Rightarrow \inse(x, X) = \cons(x,\nil)$ & \checkmark \\
    SAR\_SMT & last\_nil & $X = \nil \Rightarrow \last(\nil) = 0$ & \checkmark \\
    SAR\_SMT & last\_singleton & $Y = \cons(x,\nil) \Rightarrow \last(Y) = x$ & \checkmark \\
    SAR\_SMT & length\_2 & $\length(\cons(x,\cons(y,\nil))) \geq 3$ &  \\
    SAR\_SMT & length\_cons1 & $n = \length (cons(x,X)) \land n' = \length( X) \Rightarrow n = n' + 1$ & \checkmark \\
    SAR\_SMT & length\_cons2 & $n = \length( X) \Rightarrow n + 1 = \length (cons(x,X))$ & \checkmark \\
    SAR\_SMT & length\_cons\_invalid & $\length (cons(x,X)) = \length( X)$ &  \\
    SAR\_SMT & length\_count & $\length( X) \geq \cnt(x, X)$ & \checkmark \\
    SAR\_SMT & length\_count\_invalid & $\length( X) > \cnt(x, X)$ &  \\
    SAR\_SMT & length\_nat & $\length( X) \geq 0$ & \checkmark \\
    SAR\_SMT & length\_nil & $X = \nil \Rightarrow \length( X) = 0$ & \checkmark \\
    SAR\_SMT & length\_non\_nil & $\length (cons(x,X)) \geq 1$ & \checkmark \\
    SAR\_SMT & prefix\_antisymetric & $\prefix( X, Y) \land \prefix( Y, X) \Rightarrow X = Y$ & \checkmark \\
    SAR\_SMT & prefix\_cons & $\prefix( X, cons(x,X)) \Rightarrow \forall y \in X.\  x = y$ & \checkmark \\
    SAR\_SMT & prefix\_cons\_invalid & $\prefix( X, cons(x,X))$ &  \\
    SAR\_SMT & prefix\_count & $\prefix( X, Y) \Rightarrow \cnt(x, X) \leq \cnt(x, Y)$ & \checkmark \\
    SAR\_SMT & prefix\_count\_invalid & $\prefix( X, Y) \Rightarrow \cnt(x, X) > \cnt(x, Y)$ &  \\
    \bottomrule
  \end{tabular}
\end{table}
\begin{table}
\caption{Lists of the instances in the benchmark sets (2/2)}
  \label{tbl:benchmark2}
  \centering
  \begin{tabular}{@{}lllr@{}}
    \toprule
    Benchmark & Instance & Property & Validity \\ \midrule
    SAR\_SMT & prefix\_exists & $\prefix( X, Y) \land 0 \in X \Rightarrow 0 \in Y$ & \checkmark \\
    SAR\_SMT & prefix\_exists\_invalid & $\prefix( X, Y) \land 0 \in Y \Rightarrow 0 \in X$ &  \\
    SAR\_SMT & prefix\_forall & $\prefix( X, Y) \land (\forall y\in Y.\  0 \leq y) \Rightarrow \forall x\in X.\  0 \leq x$ & \checkmark \\
    SAR\_SMT & prefix\_forall\_invalid & $\prefix( X, Y) \land (\forall x\in X.\  0 \leq x) \Rightarrow \forall y\in Y.\  0 \leq y$ &  \\
    SAR\_SMT & prefix\_hd & $\prefix( (cons(x,X)), (cons(y,Y))) \Rightarrow x = y$ & \checkmark \\
    SAR\_SMT & prefix\_length & $\prefix( X, Y) \Rightarrow \length( X) \leq \length( Y)$ & \checkmark \\
    SAR\_SMT & prefix\_non\_nil & $\prefix( (cons(x,X)), Y) \Rightarrow Y \neq \nil$ & \checkmark \\
    SAR\_SMT & prefix\_nth & $\prefix( X, Y) \land x = \nth(X,i) \Rightarrow x = \nth(Y,i) $ & \checkmark \\
    SAR\_SMT & prefix\_sum & $\prefix( X, Y) \land (\forall y\in Y.\  0 \leq y) \Rightarrow \sm(X) \leq \sm(Y)$ & \checkmark \\
    SAR\_SMT & prefix\_sum\_invalid & $\prefix( X, Y) \Rightarrow \sm(X) \leq \sm(Y)$ &  \\
    SAR\_SMT & prefix\_trans & $\prefix( X, Y) \land \prefix( Y, zs) \Rightarrow \prefix( X, zs)$ & \checkmark \\
    SAR\_SMT & scan\_sum\_length & $Y = \sm(X) \Rightarrow \length( X) = \length( Y)$ & \checkmark \\
    SAR\_SMT & scan\_sum\_pos & $Y = \sm(X) \land (\forall y\in Y.\  0 \leq y) \Rightarrow \forall x\in X.\  0 \leq x$ & \checkmark \\
    SAR\_SMT & scan\_sum\_pos\_invalid & $Y = \sm(X) \land (\forall x\in X.\  0 \leq x) \Rightarrow \forall y\in Y.\  0 \leq y$ &  \\
    SAR\_SMT & sorted & $sorted( X) \land head X \geq 0 \Rightarrow 0 \leq \nth(X, i)$ & \checkmark \\
    SAR\_SMT & sorted\_forall & $\sorted( (cons(x,X))) \Rightarrow \forall x\in X.\  0 \leq x$ & \checkmark \\
    SAR\_SMT & sorted\_neg & $\neg (\sorted( (cons(x,X))) \land (\exists x' \in X.\   0 > x') \land x \geq 0)$ & \checkmark \\
    SAR\_SMT & sorted\_nth & $\sorted( (cons(x,X))) \Rightarrow x \leq \nth(X,i)$ & \checkmark \\
    SAR\_SMT & sorted\_pos & $\sorted( (cons(x,X))) \land 0 \leq x \Rightarrow \forall x\in X.\  0 \leq x$ & \checkmark \\
    SAR\_SMT & sorted\_prefix & $\sorted( Y) \land \prefix( X, Y) \Rightarrow \sorted( X)$ & \checkmark \\
    SAR\_SMT & sorted\_singleton & $\sorted( (cons(x,X))) \land (\exists x' \in X.\   x > x') \Rightarrow X = \nil$ & \checkmark \\
    SAR\_SMT & sumfold\_gte & $(\forall x\in X.\  0 \leq x) \land \sm(X) \geq x \times \length(X)$ & \checkmark \\
    SAR\_SMT & take\_cons & $n > 0 \land Y = \take(n-1, X)$ & \checkmark \\
             &            & $\qquad \Rightarrow \cons(x,Y) = \take(n, (\cons(x,X)))$ &  \\
    SAR\_SMT & take\_exists & $Y = \take(n, X) \land 0 \in X \Rightarrow 0 \in Y$ & \checkmark \\
    SAR\_SMT & take\_length & $n \geq 0 \land Y = \take(n, X) \Rightarrow \length( Y) \Rightarrow n$ & \checkmark \\
    SAR\_SMT & take\_length\_invalid & $Y = \take(n, X) \Rightarrow \length( Y) = n$ &  \\
    SAR\_SMT & take\_nil & $\take(n, X) = \nil$ & \checkmark \\
    SAR\_SMT & take\_sum & $Y = \take'( n, X) \land (\forall x\in X.\  0 \leq x) \Rightarrow \sm(Y) \leq \sm(X)$ & \checkmark \\
    SAR\_SMT & take\_total\_count & $Y = \take'( n, X) \Rightarrow \cnt(x, X) \leq \cnt(x, Y)$ & \checkmark \\
    SAR\_SMT & take\_total\_length\_eq & $Y = \take'( n, X) \land 0 \leq n \leq \length( Y) \Rightarrow n = \length( X)$ & \checkmark \\
    SAR\_SMT & take\_total\_length\_invalid & $Y = \take'( n, X) \land n \leq \length( Y) \Rightarrow n = \length( X)$ &  \\
    SAR\_SMT & take\_total\_length\_leq & $X = \take'( n, Y) \land n_1 = \length( X) \land n_2 = \length( Y)$ & \checkmark \\
             &                          & \qquad$\Rightarrow n_1 \leq n_2$ &  \\
    CHC & compare\_length & properties on comparisons of list lengths & \checkmark \\
    CHC & fib & a property on the length of lists of the fibonacci numbers  & \checkmark \\
    CHC & filter & VC: \texttt{filter} reduces the length of the given list & \checkmark \\
    CHC & insert\_sort & VC: sortedness of insertion sort & \checkmark \\
    CHC & insertion & sortedness of insertion & \checkmark \\
    CHC & leq\_list & antisymetricity of the dictionary order on lists & \checkmark \\
    CHC & make\_length & VC: \texttt{make} generates a list of the given length  & \checkmark \\
    CHC & merge\_sort & VC: sortedness of merge sort & \checkmark \\
    CHC & rev\_length & VC: \texttt{reverse} preserves lengths & \checkmark \\
    CHC & same & $(\forall x\in X.\ x = y) \land (\exists x\in X.\ x = y+1) \Rightarrow \bot$ & \checkmark \\
    CHC & sorted & $\sorted(\cons(1,X)) \land (0 \in X) \Rightarrow \bot $ & \checkmark \\
    CHC & sum & a property on element-wise summation & \checkmark \\
    \bottomrule
  \end{tabular}
\end{table}
}

Tables~\ref{tbl:benchmark1} and \ref{tbl:benchmark2} show the lists of the instances of the benchmark sets used in the experiments.
The column ``Instance'' shows the name of the instances.
The column ``Property'' shows the properties represented in the instances.
The column ``Validity'' shows the validity of the property. The checkmark $\checkmark$ indicates that the property holds.
In the benchmark ``CHC'', ``VC:'' means the verification condition for refinement types.
We use a set-like notation $x \in X$ for that $x$ is an element of the list $X$.

Tables~\ref{tbl:exp-detail1} and \ref{tbl:exp-detail2} show the experimental results.
The numbers in the tables show the running time of the tool for the instance in seconds.
The empty cell means the tool does not verify the instance by the timeout or aborts by out of memory.
The only instance that is not verified by our tool is ``sumfold-gte'', which is not verfied by also the other solvers.

\begin{table}[tb]
\centering
{
  \def\mc#1#2#3{\multicolumn{#1}{#2}{#3}}
  \caption{Experimental results (1/2)}
  \label{tbl:exp-detail1}
  \begin{tabular}{lrrrrrrr}
    \toprule
    & \multicolumn{3}{c}{\textbf{Ours}} & \multicolumn{2}{c}{(rec)} & \multicolumn{2}{c}{(assert)} \\ \cmidrule(r){2-4} \cmidrule(r){5-6} \cmidrule(r){7-8}
    & \textbf{Spacer} & \quad\textbf{\hoice} & \quad\textbf{Eldarica} & \quad\textbf{Z3} & \quad\textbf{CVC4} & \quad\textbf{Z3} &\quad\textbf{CVC4} \\ \midrule
\textbf{IsaPlanner} \\
\quad prop\_04.smt2 & & 1.891&3.741 & & &\ 0.018&0.012\\
\quad prop\_05.smt2 & & 4.756&5.42 & &  & & \\
\quad prop\_15.smt2 & & 4.103&3.725 & &  & & 0.287\\
\quad prop\_16.smt2&0.823&0.877&1.531&\ 0.019&0.009&0.018&0.006\\
\quad prop\_29.smt2&0.17&0.168&0.657&0.019&0.012&0.019&0.013\\
\quad prop\_30.smt2&0.179&0.178&0.649&0.019&0.014 & & \\
\quad prop\_39.smt2 & & &6.151 & & &0.017&0.016\\
\quad prop\_40.smt2&0.285&0.3&1.003&0.017&0.006&0.018&\\
\quad prop\_42.smt2 & & 58.646&8.199 & & &0.013&0.007\\
\quad prop\_50.smt2 & & 0.922 & &  & & &\\
\quad prop\_62.smt2&2.837&3.266&3.917&0.041&0.0209&0.020&\\
\quad prop\_67.smt2 & & 16.086&16.73 & &  & & \\
\quad prop\_71.smt2&1.644&4.566&4.849 & &  & & \\
\quad prop\_77.smt2&0.456&1.067&2.227 & &  & & \\
\quad prop\_86.smt2&1.564&5.321&4.741 & &  & & \\
\textbf{SAR\_SMT} \\
\quad count\_cons.smt2 & & 13.543&3.536&0.018&0.011&0.018&0.008\\
\quad count\_cons\_eq.smt2 & & 1.913&2.857&0.014&0.012&0.021&0.01\\
\quad count\_cons\_invalid.smt2&1.275&1.613&3.462&0.021&0.015 & & \\
\quad count\_cons\_neq1.smt2 & & 20.32&5.371 & & &0.016&0.008\\
\quad count\_cons\_neq2.smt2 & & &3.55&0.017&0.014&0.021&0.008\\
\quad count\_nil1.smt2&0.434&0.432&0.916&0.019&0.009&0.018&0.007\\
\quad count\_nil2.smt2&0.092&0.087&0.58&0.019&0.010&0.019&0.008\\
\quad ins\_head1.smt2&0.608&0.613&1.174&0.019&0.007&0.017&0.009\\
\quad ins\_head2.smt2&1.993&5.792&4.222&0.016&0.012&0.021&0.009\\
\quad ins\_insort.smt2&0.784&4.061&2.073 & &  & & \\
\quad ins\_insort\_invalid.smt2&0.716&1.165&1.847&0.020&0.031 & & \\
\quad ins\_nil.smt2&0.246&0.261&0.926&0.020&0.013&0.020&0.005\\
\quad last\_nil.smt2&0.426&0.42&0.893&0.018&0.009&0.015&0.006\\
\quad last\_singleton.smt2&0.495&0.66&1.276&0.019&0.007&0.018&0.005\\
\quad length\_2.smt2&0.639&0.65&1.325&0.018&0.013 & & \\
\quad length\_cons1.smt2 & & 2.705&2.542&0.018&0.010&0.019&0.007\\
\quad length\_cons2.smt2 & & 0.528&1.446&0.020&0.010&0.019&0.01\\
\quad length\_cons\_invalid.smt2&1.209&1.421&2.44&0.017&0.01 & & \\
\quad length\_count.smt2 & & 29.491&2.628 & &  & & \\
\quad length\_count\_invalid.smt2&1.257&1.355&2.485&0.019&0.007 & & \\
\quad length\_nat.smt2&0.493&0.534&1.312 & &  & & \\
\quad length\_nil.smt2&0.155&0.114&0.743&0.036&0.028&0.018&0.008\\
\quad length\_non\_nil.smt2&0.5&0.528&1.364 & &  & & \\
\quad prefix\_antisymetric.smt2&0.148&0.087&0.678 & &  & & \\
\quad prefix\_cons.smt2&0.184&0.2&0.827 & &  & & \\
\quad prefix\_cons\_invalid.smt2&0.161&0.184&0.962&0.019&0.01 & & \\
\quad prefix\_count.smt2 & & &2.666 & &  & & \\
\quad prefix\_count\_invalid.smt2&1.31&1.794&2.705&0.017&0.011 & & \\
\bottomrule
  \end{tabular}
}
\end{table}
\begin{table}[tb]
\centering
{
  \def\mc#1#2#3{\multicolumn{#1}{#2}{#3}}
  \caption{Experimental results (2/2)}
  \label{tbl:exp-detail2}
  \begin{tabular}{lrrrrrrr}
    \toprule
    & \multicolumn{3}{c}{\textbf{Ours}} & \multicolumn{2}{c}{(rec)} & \multicolumn{2}{c}{(assert)} \\ \cmidrule(r){2-4} \cmidrule(r){5-6} \cmidrule(r){7-8}
    & \textbf{Spacer} & \quad\textbf{\hoice} & \quad\textbf{Eldarica} & \quad\textbf{Z3} & \quad\textbf{CVC4} & \quad\textbf{Z3} &\quad\textbf{CVC4} \\ \midrule
\quad prefix\_exists&0.147&0.141&0.628 & &  & & \\
\quad prefix\_exists\_invalid&0.192&0.202&1.052&\ 0.016&0.01 & & \\
\quad prefix\_forall&0.091&0.087&0.568 & &  & & \\
\quad prefix\_forall\_invalid&0.139&0.162&0.958& 0.019&0.011 & & \\
\quad prefix\_hd&0.214&0.212&0.697&0.018&0.009&\ 0.019&0.005\\
\quad prefix\_length & & 1.778&2.58&0.016&0.010&0.015&0.007\\
\quad prefix\_non\_nil&0.06&0.062&0.554&0.020&0.010&0.019&0.007\\
\quad prefix\_nth & & 1.732&5.454 & &  & & \\
\quad prefix\_sum&1.319&20.028&2.612 & &  & & \\
\quad prefix\_sum\_invalid&1.363&1.641&2.622&0.041&0.034 & & \\
\quad prefix\_trans&0.134&0.083&0.675 & &  & & \\
\quad scan\_sum\_length & & 2.043&3.058 & &  & & \\
\quad scan\_sum\_pos&0.214&0.199&1.232 & &  & & \\
\quad scan\_sum\_pos\_invalid&0.212&0.204&1.296&0.043&0.048 & & \\
\quad sorted&0.999&1.153&2.322 & &  & & \\
\quad sorted\_forall&0.175&0.211&0.988 & &  & & \\
\quad sorted\_neg&0.424&0.505&1.439 & &  & & \\
\quad sorted\_nth&0.565&0.721&1.706 & &  & & \\
\quad sorted\_pos&0.32&0.404&1.243 & &  & & \\
\quad sorted\_prefix&0.152&0.165&1.266 & &  & & \\
\quad sorted\_singleton&0.208&0.271&0.89&0.017&0.013&0.019&\\
\quad sumfold\_gte & &  & &  & & &\\
\quad take\_cons & & 3.306&3.908&0.039&0.03&0.017&0.010\\
\quad take\_exists&0.385&0.771&1.834 & &  & & \\
\quad take\_length & & 4.801&4.203 & &  & & \\
\quad take\_length\_invalid&1.289&1.664&3.048&0.019&0.008 & & \\
\quad take\_nil&0.169&0.118&0.707&0.039&0.028&0.018&0.008\\
\quad take\_sum&3.523&47.381&6.189 & &  & & \\
\quad take\_total\_count & & &7.01 & &  & & \\
\quad take\_total\_length\_eq & & &10.85 & &  & & \\
\quad take\_total\_length\_invalid&6.36&55.695&9.545&0.018&0.012 & & \\
\quad take\_total\_length\_leq & & 11.185&6.054 & &  & & \\
\textbf{CHC} \\
\quad compare\_length&1.174&1.328&5.629 & &  & & \\
\quad fib & & 16.824&20.671 & &  & & \\
\quad filter & & 4.763&10.414 & & &0.035&0.013\\
\quad insert\_sort&3.013&28.891&13.806 & & 0.048 & & \\
\quad insertion&3.786&8.14&12.265 & & &0.018&\\
\quad leq\_list&0.407&0.66&2.428 & &  & & \\
\quad make\_length & & 3.84&9.109&0.017&0.022&0.013&0.022\\
\quad merge\_sort&5.185 & & 33.017 & & 0.08 & & \\
\quad rev\_length & & 4.746&12.024 & & &0.023&0.011\\
\quad same&0.571&0.628&3.809 & &  & & \\
\quad sorted&0.9&1.473&4.967 & &  & & \\
\quad sum&0.809&1.129&4.797 & &  & & \\
\bottomrule
  \end{tabular}
}
\end{table}

\end{document}